\documentclass[lettersize,onecolumn]{IEEEtran}
\usepackage{amsmath,amsfonts,amssymb,amsthm}
\usepackage{array,booktabs,url,graphicx,cite}
\usepackage{algorithm,algorithmic}
\usepackage{xcolor,tikz}
\newcommand{\doi}[1]{\href{https://doi.org/#1}{doi:#1}}
\usepackage[hidelinks,colorlinks=true,linkcolor=blue,
            citecolor=blue,urlcolor=blue]{hyperref}

\newtheorem{theorem}{Theorem}

\newtheorem{proposition}[theorem]{Proposition}

\newtheorem{definition}{Definition}
\newtheorem{remark}{Remark}

\newcommand{\GF}[1]{\mathbb{F}_{2^{#1}}}
\newcommand{\Fn}{\mathbb{F}_2^n}
\newcommand{\Ftwo}{\mathbb{F}_2}
\newcommand{\piF}{\pi_F}

\newcommand{\CCZsim}{\overset{\scriptscriptstyle\mathrm{CCZ}}{\sim}}
\newcommand{\EAsim}{\overset{\scriptscriptstyle\mathrm{EA}}{\sim}}
\newcommand{\B}{\mathbb{B}}

\begin{document}

\title{Quadratic APN Functions in Dimension~8 via\\
Gr\"obner Basis Search in a Self-Equivalence Subspace}

\author{Oleksandr~Kuznetsov,~\IEEEmembership{Member,~IEEE}%
\thanks{O.~Kuznetsov is with the
Department of Theoretical and Applied Sciences,
eCampus University, Via Isimbardi~10, 22060 Novedrate (CO), Italy,
and also with the
Department of Intelligent Software Systems and Technologies,
School of Computer Science and Artificial Intelligence,
V.N.~Karazin Kharkiv National University,
4~Svobody Sq., 61022 Kharkiv, Ukraine.
E-mail: \href{mailto:oleksandr.kuznetsov@uniecampus.it}{oleksandr.kuznetsov@uniecampus.it}.
ORCID: \href{https://orcid.org/0000-0003-2331-6326}{0000-0003-2331-6326}.}%
\thanks{Dataset and source code:
\url{https://github.com/KuznetsovKarazin/apn-gb-search}; %
DOI: \url{https://doi.org/10.5281/zenodo.20626047}}
\thanks{Manuscript received \today.}}

\markboth{IEEE Transactions on Information Theory}%
{Kuznetsov: Quadratic APN Functions via Gr\"obner Basis in a Self-Equiv.\ Subspace}

\maketitle

%% ─────────────────────────────────────────────────────────────────────
\begin{abstract}
We describe a computational search for quadratic APN (Almost Perfect
Nonlinear) functions over $\GF{8}$ within a structured algebraic subspace
defined by a self-equivalence constraint. The search space is the
40-dimensional $\Ftwo$-linear subspace $V_{\!A} = \{F : F\circ A = A\circ F\}$
for a specific linear automorphism $A$ of order~5 (class index~22 in the
taxonomy of Beierle, Brinkmann, and Leander~\cite{BBL2021}); this subspace
was previously reported to contain no APN functions under the recursive tree
search method of~\cite{BBL2021}.

We combine two phases: (1)~random sampling inside $V_{\!A}$ via an explicit
RREF parameterization to find APN center functions (${\approx}600$ fresh
APN-positive evaluations per core-hour), and (2)~Gr\"obner basis computation
in Magma over the Boolean polynomial ring to enumerate all APN functions in a
24-dimensional hyperplane through each center (${\approx}10$ minutes per
hyperplane).

From 428 hyperplane computations (covering $0.65\%$ of the
$2^{16} = 65\,536$ total hyperplanes in $V_{\!A}$) we obtained 566 quadratic
APN functions that fall into six CCZ-equivalence classes under the
ortho-derivative invariant~\cite{Canteaut2022}. Four of these classes,
comprising 500 functions, match no entry in the Beierle et al.\ 2025
database of 3\,775\,599 quadratic APN functions~\cite{Beierle2025} and
no entry in the pre-2020 compilation of 12\,921 instances~\cite{BBL2021}.
Two classes (66 functions) are identified as CCZ-equivalents of the Gold
functions $x^3$ and $x^9$, confirming pipeline correctness; a membership
test further shows that the three purely new classes (B, C, D) lie
entirely outside $V_A$ and are found exclusively in Gold-centered slices,
establishing the essential role of the Gröbner basis step.

For quadratic APN functions, a signature mismatch
$\sigma(F)\neq\sigma(G)$ rigorously certifies $F\not\CCZsim G$, by
Yoshiara's theorem~\cite{Yoshiara2012} (CCZ\,=\,EA for quadratic APN)
together with the ortho-derivative invariant of~\cite{Canteaut2022};
the absence of a signature match in the above databases therefore
constitutes a rigorous proof of CCZ-inequivalence for the new classes.

The complete dataset, source code, and verification scripts are publicly
available.
\end{abstract}

\begin{IEEEkeywords}
Almost Perfect Nonlinear functions, APN, CCZ-equivalence,
quadratic Boolean functions, self-equivalence subspace,
Gr\"obner basis, Boolean polynomial ring, ortho-derivative,
differential cryptanalysis.
\end{IEEEkeywords}

%% ─────────────────────────────────────────────────────────────────────
\section{Introduction}
\label{sec:intro}

\IEEEPARstart{A}{lmost} Perfect Nonlinear (APN) functions are vectorial
Boolean functions $F\colon\Fn\to\Fn$ achieving the minimum possible
differential uniformity~$\delta_F = 2$, i.e., for every nonzero $a$ and
every $b$ the equation $F(x\oplus a)\oplus F(x) = b$ has at most two
solutions. They were introduced to cryptography by Nyberg~\cite{Nyberg1994}
as optimal S-box designs for resistance against differential
cryptanalysis~\cite{BihamShamir1991}: any bijective function with
$\delta_F = 2$ achieves the smallest possible differential probability
in a single round. APN permutations (bijective APN functions) are
particularly sought, but their existence is settled only for odd~$n$;
for even~$n \geq 8$ the question of whether an APN permutation exists
is a major open problem~\cite{Browning2010}.

The classification of APN functions up to CCZ-equivalence~\cite{Carlet1998}
is an active research program. CCZ-equivalence is the broadest equivalence
preserving the APN property; two S-boxes in the same CCZ-class are
interchangeable for cryptographic purposes. For $n \leq 6$ the classification
is essentially complete; for $n = 7$ and $n = 8$ only partial results exist.

\subsection{Known APN functions for $n=8$}

The classical infinite families of APN functions are the Gold power
functions $x^{2^i+1}$ \cite{Gold1968} and the Kasami functions
$x^{4^i-2^i+1}$ \cite{Kasami1971}. For $n = 8$, the Gold functions
with $\gcd(i,8)=1$ are quadratic (degree~2) and form exactly two CCZ-classes:
$\{x^3, x^{129}\}$ and $\{x^9, x^{33}\}$ (the Frobenius orbits of $i=1,7$
and $i=3,5$ respectively). All other APN power functions for $n=8$
(Kasami $x^{21}$, $x^{85}$, etc.) have degree $> 2$ and are therefore
not quadratic.

Beyond power functions, a series of quadratic APN constructions appeared over
two decades:
Edel and Pott~\cite{EdelPott2009} found 8\,157 new instances by a bivariate
construction;
Yu, Wang, and Li~\cite{YWL2014} obtained more via a matrix method;
Weng, Tan, and Gong~\cite{WTG2013} contributed additional instances via algebraic characterization;
Taniguchi~\cite{Taniguchi2019} described a parametric family;
Budaghyan et al.~\cite{BudaghyanCalderiniCarlet2020} introduced isotopic shifts.
Beierle, Brinkmann, and Leander~\cite{BBL2021} systematically studied
self-equivalence subspaces (see Section~\ref{sec:bbl}) and found 12\,921
new instances; a companion paper~\cite{BeierleLeander2022} found a further
11\,779 instances via extension methods. Together with the above, approximately
32\,892 CCZ-inequivalent quadratic APN functions for $n=8$ were known through
2021; Beierle, Leander, and Perrin~\cite{BeierleLeanderPerrin2022} added
6\,368 more via trimming and extension.

In 2025, Beierle, Langevin, Leander, Polujan, and Rasoolzadeh~\cite{Beierle2025}
extended this dramatically: using two construction strategies based on vectorial
bent functions and an extensive computer search, they produced 3\,775\,599
new functions, estimating the total number of CCZ-classes at approximately
6~million.

\subsection{Motivation and approach}

Despite the large size of the Beierle 2025 database, certain structured
algebraic regions of the search space remain unexplored. A quadratic function
$F\colon\GF{8}\to\GF{8}$ is described by a binary coefficient vector of
length 224 (see Section~\ref{sec:prelim}). The full space has size $2^{224}$;
any structured subspace of dimension $d \ll 224$ has probability $2^{d-224}$
of being hit by a random function.

The \emph{self-equivalence subspace} $V_{\!A}$ defined by a linear
automorphism $A$ (see Section~\ref{sec:bbl}) is one such structured subspace.
For the automorphism $A$ studied here ($\mathrm{ord}(A) = 5$), this subspace
has dimension~40. The probability of a random quadratic function lying in
$V_{\!A}$ is $2^{40-224} = 2^{-184} \approx 10^{-55}$, so global random
search never reaches it. Within $V_{\!A}$, however, the APN density is
approximately $10^{-4}$ (see Section~\ref{sec:density}), making targeted
search tractable.

The recursive tree search of BBL2021 found zero APN functions in this
particular subspace. We show that a Gr\"obner basis approach combined with
explicit RREF parameterization of $V_{\!A}$ succeeds where the tree search
did not.

\subsection{Contributions}

This paper makes the following contributions:
\begin{enumerate}
  \item We establish that the self-equivalence subspace associated with
        BBL2021 class index~22 (the automorphism of order~5 described above)
        does contain quadratic APN functions, in contrast to the zero-solution
        result of the original BBL2021 search.
  \item We develop an explicit RREF parameterization (fc22$+$sol22) of this
        40-dimensional subspace, enabling efficient random sampling with
        APN density~$\approx 10^{-4}$.
  \item We implement an NL\,=\,4 Gr\"obner basis slice computation in Magma
        that, given any APN center in the subspace, finds all APN functions
        in a 24-dimensional hyperplane through that center.
  \item Over two batches (428 hyperplanes, $0.65\%$ of the total
        $65\,536$) we find 566 quadratic APN functions in six CCZ-classes.
        Four of these classes are absent from all known databases.
  \item We provide a complete verification pipeline using the
        ortho-derivative invariant of~\cite{Canteaut2022} and confirm via
        streaming comparison that the four new classes do not appear in
        the Beierle 2025 database or any prior compilation.
\end{enumerate}

The approach generalizes: the same pipeline applies to every BBL2021 class
index with manageable subspace dimension, offering a systematic program
for further extending the classification.

\subsection{Organization}

Section~\ref{sec:prelim} fixes notation and recalls required background.
Section~\ref{sec:bbl} reviews the BBL2021 framework.
Section~\ref{sec:VA} describes the specific subspace studied.
Section~\ref{sec:method} presents the search pipeline in detail.
Section~\ref{sec:results} reports experimental results.
Section~\ref{sec:verify} describes the verification procedure and states
the main theorem.
Section~\ref{sec:discussion} discusses the results and their limitations.
Section~\ref{sec:related} surveys related work.
Section~\ref{sec:conclusion} concludes.
Appendix~\ref{app:reps} gives the S-box tables of the four new classes.
Appendix~\ref{app:infra} describes the computational infrastructure.

%% ─────────────────────────────────────────────────────────────────────
\section{Preliminaries}
\label{sec:prelim}

Throughout, $n = 8$ and $N = 2^n = 256$, unless otherwise stated.
We identify $\GF{8}$ with $\Ftwo^8$ and write $\oplus$ for bitwise XOR.
Scalar and vector indices run from 0.

\subsection{Vectorial Boolean functions and differential uniformity}

A \emph{vectorial Boolean function} (VBF) is a map $F\colon\Fn\to\Fn$.
It may be written coordinate-wise as $F = (f_0, \ldots, f_{n-1})$ with
each $f_j\colon\Fn\to\Ftwo$. The \emph{differential table} of $F$
records, for each $a \in \Fn$ and $b \in \Fn$,
\[
  \delta_F(a,b) = |\{x \in \Fn : F(x\oplus a)\oplus F(x) = b\}|.
\]
The \emph{differential uniformity} is
$\delta_F = \max_{a\neq 0,\,b} \delta_F(a,b)$.

\begin{definition}
$F$ is \emph{almost perfect nonlinear} (APN) if $\delta_F = 2$.
\end{definition}

The value~2 is the minimum possible for functions on $\Fn$ with $n>1$
\cite{Nyberg1994}: for every nonzero $a$, the derivative
$D_aF(x) = F(x\oplus a)\oplus F(x)$ is a map $\Fn\to\Fn$ and
$\sum_b\delta_F(a,b) = 2^n$, so the maximum is at most~$2^n$ and at least~2.

\subsection{Algebraic normal form and quadratic functions}

Every $f_j\colon\Fn\to\Ftwo$ has a unique \emph{algebraic normal form} (ANF),
a multilinear polynomial over $\Ftwo$. A function $F$ is \emph{quadratic}
if every $f_j$ has ANF degree at most~2:
\[
  f_j(x_0,\ldots,x_{n-1})
  = c_{0,j}
    \oplus \bigoplus_{0\leq k < n} c_{\{k\},j}\,x_k
    \oplus \bigoplus_{0\leq p < q \leq n-1} c_{(p,q),j}\,x_p x_q.
\]
The $n\binom{n}{2} = 8\cdot28 = 224$ quadratic coefficients
$c_{(p,q),j}\in\Ftwo$ completely determine the quadratic part of $F$
(the linear part does not affect the APN property if $F(0)=0$ is assumed
by normalisation).
We index them as $c_1,\ldots,c_{224}$ in lexicographic order of
$((p,q),j)$, forming a vector $\mathbf{c}\in\Ftwo^{224}$.

The pair index is
\[
  \mathrm{idx}(p,q) = \frac{p(2n-p-1)}{2} + q - p - 1 \in \{0,\ldots,27\},
\]
and the full index is $c_{\mathrm{idx}(p,q)\cdot n + j + 1}$.

\subsection{CCZ- and EA-equivalence}

\begin{definition}[\cite{Carlet1998}]
Functions $F$, $G\colon\Fn\to\Fn$ are
\emph{CCZ-equivalent} ($F\CCZsim G$) if there exists an affine permutation
$\mathcal{L}$ of $\Fn\times\Fn$ such that $\mathcal{L}(\mathcal{G}_F)
= \mathcal{G}_G$, where $\mathcal{G}_F = \{(x,F(x))\}$ is the graph of $F$.

They are \emph{EA-equivalent} ($F\EAsim G$) if $G = A_1\circ F\circ A_2 + A$
for affine permutations $A_1,A_2$ and affine $A$.
\end{definition}

EA-equivalence implies CCZ-equivalence. The converse fails in general
but holds for quadratic functions:

\begin{theorem}[Yoshiara~\cite{Yoshiara2012}]
\label{thm:yosh}
If $F$ and $G$ are quadratic APN functions, then
$F\CCZsim G \Leftrightarrow F\EAsim G$.
\end{theorem}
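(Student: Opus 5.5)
The reverse implication $F\EAsim G \Rightarrow F\CCZsim G$ is immediate and does not need quadratic or APN. If $G=A_1\circ F\circ A_2+A$, then the affine map
\[
\mathcal{L}(x,y)=\bigl(A_2^{-1}(x),\,A_1(y)+A(A_2^{-1}(x))\bigr)
\]
is an affine permutation of $\Fn\times\Fn$. It sends $(x,F(x))$ to $(A_2^{-1}(x),G(A_2^{-1}(x)))$, so $\mathcal{L}(\mathcal{G}_F)=\mathcal{G}_G$. All the content is in the forward direction.

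For the forward direction I would follow Yoshiara's strategy and translate CCZ-equivalence of quadratic APN functions into an isomorphism of associated combinatorial objects. Normalise so that $F(0)=G(0)=0$ and both are purely quadratic. For a quadratic APN $F$, every derivative $D_aF(x)+F(a)$ with $a\neq 0$ is an $\Ftwo$-linear map whose kernel is $\{0,a\}$. Its image is therefore a hyperplane $H_a$ of $\Fn$. The graph $\mathcal{G}_F$ determines a family of $2$-dimensional affine subspaces $\{(x,F(x)),(x+a,F(x+a)),\dots\}$, and these form a dimensional dual hyperoval (DHO) attached to $F$. The plan is as follows.
\textbf{(i)} Show that a CCZ map $\mathcal{L}$ with $\mathcal{L}(\mathcal{G}_F)=\mathcal{G}_G$ induces an isomorphism of the corresponding DHOs. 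This holds because $\mathcal{L}$ is affine and maps the points of one graph bijectively onto the other, so it preserves sums of four graph points and hence the DHO's members.
\textbf{(ii)} Show that the DHO associated with a quadratic APN function has a canonical "translation" structure. In particular, the subspace $\{0\}\times\Fn$ (the "ambient second coordinate") is intrinsically determined by the DHO. Concretely, it should be recoverable as the span of all pairwise intersections/sums of members, using the fact that the images $H_a$ span $\Fn$ and that the associated bilinear form $B(a,b)=F(a+b)+F(a)+F(b)$ generates the second coordinate.
\textbf{(iii)} Conclude that the linear part of $\mathcal{L}$ preserves $\{0\}\times\Fn$. It is then block triangular, $\mathcal{L}(x,y)=(L_1x,\;L_2x+L_3y)+c$ with $L_1,L_3$ invertible. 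This is exactly EA-equivalence: $G=L_3\circ F\circ L_1^{-1}+\text{affine}$.

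The main obstacle is step (ii), proving that the vertical subspace is an invariant of the graph rather than of the chosen coordinates. I would first try a direct counting and spanning argument. Consider the set $S=\{\,u+v+w+z : u,v,w,z\in\mathcal{G}_F,\ u+v+w+z \text{ has first coordinate }0\,\}$. For quadratic $F$ this equals $\{(0,B(a,b))\}$, whose span should be all of $\{0\}\times\Fn$ (or at least a subspace that an arbitrary CCZ map must preserve). The difficulty is that the condition "first coordinate $0$" itself refers to the coordinate split. One must instead characterise these vectors intrinsically: they are exactly the sums of quadruples of graph points lying in a $2$-dimensional affine subspace, and the APN property forces every such quadruple to arise this way. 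If the span is a proper subspace, I expect to need Yoshiara's finer argument via the automorphism group of the DHO, since a translation group of order $2^n$ is normal. Step (i) and the final deduction are routine once this is settled.
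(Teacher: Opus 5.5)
Your proof of $F\EAsim G\Rightarrow F\CCZsim G$ is correct. The paper gives no argument for either direction; it simply cites~\cite{Yoshiara2012}. The forward direction, however, has a genuine gap, and the problem goes beyond step~(ii) being left open. Steps~(ii)--(iii) try to show that the linear part of \emph{every} affine permutation $\mathcal{L}$ with $\mathcal{L}(\mathcal{G}_F)=\mathcal{G}_G$ preserves $\{0\}\times\Fn$, i.e.\ that the vertical subspace is an invariant of the graph. That is strictly stronger than the theorem, and it is false in general.

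Here is a counterexample. Take $n=3$, $F(x)=x^3$ on $\GF{3}$, and the linear bijection $\mathcal{L}(x,y)=(y,x^2)$. Setting $z=x^3$ gives $z^3=x^9=x^2$, so $\mathcal{L}(x,x^3)=(z,z^3)$. Since $x\mapsto x^3$ is a bijection of $\GF{3}$, it follows that $\mathcal{L}(\mathcal{G}_F)=\mathcal{G}_F$. Yet $\mathcal{L}(0,y)=(y,0)$, so $\{0\}\times\GF{3}$ is sent to $\GF{3}\times\{0\}$. This does not contradict the theorem, since $F$ is trivially EA-equivalent to itself.

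Consequently, no subspace defined from $\mathcal{G}_F$ alone can coincide with $\{0\}\times\Fn$. This covers a DHO, spans of four-point sums, or anything else invariant under affine automorphisms of the graph. The theorem only asserts that \emph{some} EA-type equivalence exists. A proof therefore has to use the freedom to replace $\mathcal{L}$ by $\mathcal{L}\circ\alpha$, with $\alpha$ an affine automorphism of $\mathcal{G}_F$.

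The specific ingredients you propose also break down.
\begin{itemize}
\item Four distinct points form a $2$-dimensional affine subspace iff they sum to $0$. On the graph of an APN function this never happens, since it would produce four solutions of some $D_aF(x)=b$. So your intrinsic description of vertical vectors as sums of quadruples lying in a $2$-flat is vacuous. The same goes for your description of DHO members as $2$-flats through graph points.
\item The DHO usually attached to $F$ has members such as $X(t)=\{(x,B(x,t)):x\in\Fn\}$. These are defined through $B$, hence through the coordinate split, so step~(i) is not automatic. In fact the $\mathcal{L}$ above sends the member $X(0)=\GF{3}\times\{0\}$ to $\{0\}\times\GF{3}$, which is not a member.
\item Your fallback, normality of a translation group, fails for the same reason. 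The maps $t_a(x,y)=(x+a,\,y+B(x,a)+F(a))$ form a regular elementary abelian group $T_F$ of EA-type automorphisms of $\mathcal{G}_F$. For $a\neq0$ the linear part $\ell_a$ of $t_a$ fixes exactly $\{(x,y):x\in\{0,a\}\}$. Hence the common fixed space of the $\ell_a$ is $\{0\}\times\Fn$. If $T_F$ were normal in $\mathrm{Aut}(\mathcal{G}_F)$, every automorphism would preserve that space, contradicting the example.
\end{itemize}

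What is actually needed is a \emph{conjugacy} statement. You must find $\alpha\in\mathrm{Aut}(\mathcal{G}_F)$ with $\alpha T_F\alpha^{-1}=\mathcal{L}^{-1}T_G\mathcal{L}$. Then $\mathcal{L}\alpha$ still maps $\mathcal{G}_F$ onto $\mathcal{G}_G$, and it conjugates $T_F$ onto $T_G$. It therefore carries the common fixed space $\{0\}\times\Fn$ of one group of linear parts onto that of the other, and your step~(iii) finishes the proof. Proving such a conjugacy, or a substitute for it, is where the entire difficulty of Yoshiara's theorem lies. Nothing in your plan supplies it.
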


CCZ-equivalence preserves the APN property, differential uniformity, and
the structure of the difference table. Two functions in the same CCZ-class
are cryptographically interchangeable as S-boxes.

\subsection{Walsh transform and differential spectrum}

For a VBF $F\colon\Fn\to\Fn$, the \emph{Walsh transform} is
\[
  \widehat{F}(a,b) = \sum_{x\in\Fn}(-1)^{b\cdot F(x)\oplus a\cdot x},
  \quad a,b\in\Fn,
\]
where $\cdot$ is the standard inner product on $\Ftwo^n$.
The \emph{absolute Walsh spectrum} is the multiset
$\{|\widehat{F}(a,b)| : a,b\in\Fn\}$.

The \emph{differential spectrum} of $F$ is the multiset
$\{\delta_F(a,b) : a\neq0, b\in\Fn\}$.
All quadratic APN functions over $\GF{8}$ have the same differential
spectrum $\{2^{2^7}\cdot 0, (2^8-1)2^7\cdot 2\}$, so this is not a
distinguishing invariant by itself. The absolute Walsh spectrum of $F$
itself takes only a handful of distinct values across all known
quadratic APN functions over $\GF{8}$~\cite{BeierleLeander2022} and is
likewise weak at separating CCZ-classes. Both spectra become far more
discriminating when applied to the ortho-derivative $\piF$ rather than
to $F$ directly (Section~\ref{sec:ortho-section-ref}).

\subsection{Ortho-derivative}
\label{sec:ortho-section-ref}

The ortho-derivative~\cite{Canteaut2022} is a more refined invariant.

\begin{definition}[\cite{Canteaut2022}]
\label{def:ortho}
Let $F\colon\Fn\to\Fn$ be a quadratic APN function. For $a\neq0$, the
derivative $D_aF(x) = F(x\oplus a)\oplus F(x)$ is an affine function of
$x$, and the corresponding linear part is
$L_a(x) = D_aF(x)\oplus D_aF(0)$. The \emph{ortho-derivative}
$\piF\colon\Fn\to\Fn$ is the unique function with $\piF(0)=0$ and, for
$a\neq0$, $\piF(a)\neq0$ a vector orthogonal (under the standard inner
product on $\Fn$) to every element of $\mathrm{Im}(L_a)$. Equivalently,
since $F$ is APN, $\dim\ker(L_a)^{\!\top}=1$ and $\piF(a)$ is the unique
nonzero element of this left kernel.
\end{definition}

\begin{theorem}[\cite{Canteaut2022}]
\label{thm:ortho}
For quadratic APN functions $F$ and $G$ over $\Fn$:
\[
  F\EAsim G
  \;\Longrightarrow\;
  \piF \EAsim \pi_G.
\]
Consequently, the \emph{ortho-derivative signature}
\[
  \sigma(F) = \bigl(
    \text{diff-spectrum}(\piF),\;
    \text{abs-Walsh-spectrum}(\piF)
  \bigr)
\]
is a CCZ-invariant in the contrapositive sense: $\sigma(F)\neq\sigma(G)$
certifies $F\not\CCZsim G$ for quadratic APN functions (using
Theorem~\ref{thm:yosh} to pass from EA- to CCZ-inequivalence).
\end{theorem}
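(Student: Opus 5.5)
The plan is to prove directly that an EA-equivalence $G = A_1\circ F\circ A_2 + A$ between quadratic APN functions induces an explicit EA-equivalence (in fact a linear equivalence) between $\piF$ and $\pi_G$. Write $A_1(y)=M_1y+b_1$, $A_2(x)=M_2x+b_2$ with $M_1,M_2$ invertible, and $A(x)=Nx+c$ affine. We may drop constants and the purely affine part $A$: adding an affine function changes $D_aG$ only by the constant $N a$, so the linear part $L_a$ is unchanged. Likewise the translation $b_2$ changes $D_aF$ only by a constant and hence does not affect $L_a$. It therefore suffices to treat $G = M_1\circ F\circ M_2$ up to affine terms.

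First I will compute the linear parts of the derivatives of $G$. Since $F$ is quadratic, $L^F_a(x)=B_F(a,x)$, where $B_F(a,x)=F(x+a)+F(x)+F(a)+F(0)$ is the symmetric bilinear form associated with $F$. Then
\[
B_G(a,x)=M_1 B_F(M_2a,M_2x),
\]
so $\mathrm{Im}(L^G_a)=M_1\,\mathrm{Im}(L^F_{M_2a})$, because $M_2$ is a bijection on the $x$-variable. By the APN property this image is a hyperplane, so its orthogonal complement is spanned by a single nonzero vector. A vector $v$ is orthogonal to $M_1 W$ exactly when $M_1^{\top}v\perp W$. Hence
\[
\pi_G(a)=(M_1^{\top})^{-1}\,\piF(M_2a)
\]
for $a\neq 0$, and the identity also holds at $a=0$ since both sides vanish. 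This exhibits $\pi_G = (M_1^{-\top})\circ\piF\circ M_2$, a linear (hence EA-) equivalence.

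The signature consequence then follows from standard facts. Both the differential spectrum and the absolute Walsh spectrum are invariant under $P\mapsto L_1\circ P\circ L_2$ with $L_1,L_2$ invertible linear maps; for the Walsh spectrum one uses the reindexing $(a,b)\mapsto(L_2^{-\top}a,\,L_1^{\top}b)$. So $F\EAsim G$ implies $\sigma(F)=\sigma(G)$. Conversely, if $\sigma(F)\neq\sigma(G)$ then $F\not\EAsim G$, and by Theorem~\ref{thm:yosh} this gives $F\not\CCZsim G$.

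The main point needing care is the reduction step. The translation $b_2$ and the constant $b_1$ must be shown not to alter $L_a$. Also, the general affine $A$ need not be a permutation, so one has to use only that it contributes a constant to each derivative. After that, the transposition identity $(M_1W)^{\perp}=M_1^{-\top}W^{\perp}$ is routine linear algebra.
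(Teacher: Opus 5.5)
Your proof is correct: the identity $L^G_a = M_1\circ L^F_{M_2a}\circ M_2$ combined with $(M_1W)^{\perp}=M_1^{-\top}W^{\perp}$ gives $\pi_G = M_1^{-\top}\circ\piF\circ M_2$. The invariance of both spectra and the passage through Theorem~\ref{thm:yosh} are also handled properly. The paper gives no argument of its own beyond citing \cite{Canteaut2022} (and Theorem~\ref{thm:yosh} for the CCZ step), and your computation is essentially the standard argument behind that cited result, with the sharper conclusion that the ortho-derivatives are linearly equivalent, not merely EA-equivalent.
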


\begin{remark}[Discriminating power of $\sigma$]
\label{rem:sigma_complete}
The implication in Theorem~\ref{thm:ortho} is one-directional and is the
only direction established in the literature~\cite{Canteaut2022}; the
converse ($\sigma(F)=\sigma(G)\Rightarrow F\CCZsim G$) is \emph{not} a
proven theorem. In practice, however, $\sigma$ is reported to be highly
discriminating: Canteaut, Couvreur, and Perrin~\cite{Canteaut2022} use it
to efficiently separate more than $20\,000$ quadratic APN functions over
$\GF{8}$ into distinct CCZ-classes, and no false positive (two
CCZ-inequivalent functions sharing a signature) has been reported for
$n=8$ to date. We therefore use $\sigma(F)\neq\sigma(G)$ as a rigorous
inequivalence certificate throughout this paper (Theorem~\ref{thm:main}),
and additionally confirm equivalence \emph{within} each putative class by
the exact code-equivalence test \texttt{are\_ccz\_equivalent\_from\_code}
of sboxU~\cite{sboxU} (Section~\ref{sec:related}), rather than relying on
matching signatures alone.
\end{remark}

%% ─────────────────────────────────────────────────────────────────────
\section{Self-Equivalence Subspaces and the BBL2021 Framework}
\label{sec:bbl}

\subsection{Definition and basic properties}

\begin{definition}[\cite{BBL2021}]
A \emph{linear equivalence (LE) automorphism} of a VBF $F$ is a linear
bijection $A\colon\Fn\to\Fn$ such that $F\circ A = A\circ F$.
The set of all LE-automorphisms of $F$ forms a subgroup of $\mathrm{GL}(n,\Ftwo)$,
the \emph{LE-automorphism group} of $F$.

For a fixed $A\in\mathrm{GL}(n,\Ftwo)$, the \emph{self-equivalence subspace} is
\[
  V_A = \{\text{quadratic }F\colon\Fn\to\Fn : F\circ A = A\circ F\}.
\]
\end{definition}

$V_A$ is a $\Ftwo$-linear subspace of the 224-dimensional space of all
quadratic functions. Explicitly, $F\circ A = A\circ F$ expands (via the ANF)
to a homogeneous linear system over $\Ftwo$ in the 224 coefficients of $F$,
so $V_A = \ker(M_A)$ for a certain $\Ftwo$-linear map $M_A$
(see Section~\ref{sec:VA} for the precise construction).

\begin{proposition}[\cite{BBL2021}]
$V_A$ contains at least one APN function if and only if $A$ is an
LE-automorphism of some quadratic APN function.
The dimension $d = \dim(V_A)$ depends only on the conjugacy class of $A$
in $\mathrm{GL}(n,\Ftwo)$.
\end{proposition}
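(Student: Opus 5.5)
The two claims are independent, and both follow almost directly from unwinding the definitions.

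\emph{First claim.} I compare the definition of $V_A$ with the definition of an LE-automorphism.

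Suppose $F\in V_A$ is APN. Then $F$ is quadratic and satisfies $F\circ A=A\circ F$ with $A\in\mathrm{GL}(n,\Ftwo)$. So $A$ is by definition an LE-automorphism of the quadratic APN function $F$.

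Conversely, suppose $A$ is an LE-automorphism of some quadratic APN function $F$. Then $F$ is quadratic and commutes with $A$, so $F\in V_A$.

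One normalisation subtlety must be checked. The paper parametrises quadratic functions by their 224 quadratic coefficients, dropping the affine part. I would verify that the commutation relation $F\circ A=A\circ F$ splits by degree. This holds because $A$ is linear, so composing with $A$ on either side preserves the homogeneous degree-$0$, $1$ and $2$ components of the ANF. Hence the purely quadratic part $Q$ of $F$ also commutes with $A$. Since $F$ and $Q$ differ by an affine map, they have the same derivatives up to constants, so $Q$ is APN as well. This shows that ``$V_A$ contains an APN function'' means the same thing under either convention.

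\emph{Second claim.} Let $B=PAP^{-1}$ with $P\in\mathrm{GL}(n,\Ftwo)$. I consider the map $\Phi\colon F\mapsto P\circ F\circ P^{-1}$.

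This map is $\Ftwo$-linear on the space of quadratic functions. It preserves quadraticity, since composing with linear maps does not raise the ANF degree. It is invertible, with inverse $G\mapsto P^{-1}\circ G\circ P$.

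It remains to show $\Phi$ maps $V_A$ onto $V_B$. If $F\circ A=A\circ F$, then
\[
\Phi(F)\circ B = PFP^{-1}PAP^{-1} = PFAP^{-1} = PAFP^{-1} = PAP^{-1}\,PFP^{-1} = B\circ\Phi(F).
\]
Applying the same argument to $\Phi^{-1}$ gives the reverse inclusion. Therefore $\Phi$ is a linear isomorphism $V_A\cong V_B$, and $\dim V_A=\dim V_B$.

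As a bonus, $\Phi$ is a linear equivalence, so it preserves the APN property. This means the existence of APN functions in $V_A$ is also a conjugacy-class invariant.

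\emph{Main obstacle.} There is no genuine difficulty here. The only point needing care is the degree-splitting argument in the first claim, which reconciles the ``quadratic-part'' coordinates with the functional definition. That argument rests on the linearity, as opposed to mere affineness, of $A$.
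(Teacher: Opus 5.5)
The paper gives no proof of this proposition; it is quoted from \cite{BBL2021}. Your argument therefore has to stand on its own. Under the literal functional definition of $V_A$, the tautological half of the first claim and your conjugation map $\Phi$ are fine. The trouble is the step you single out as the only delicate one: it is false.

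\textbf{The degree-splitting claim fails.} Composing with a linear $A$ on the \emph{input} side does not preserve ANF degree components over $\Ftwo$, because the ANF is reduced modulo $x_r^2=x_r$. For $A(x_0,x_1)=(x_0+x_1,x_1)$ one gets $x_0x_1\circ A=(x_0+x_1)x_1=x_0x_1+x_1$. Consequently the homogeneous quadratic part of a commuting $F$ need not commute with $A$. With this $A$, the function $F(x)=(x_0x_1+x_0,\,0)$ satisfies $F\circ A=A\circ F$ (check the four points). But its quadratic part $Q(x)=(x_0x_1,0)$ gives $Q(A(0,1))=Q(1,1)=(1,0)\neq(0,0)=A\,Q(0,1)$. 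In general, if $F=c+L+Q$ commutes with $A$, then $Q$ commutes with $A$ if and only if $L$ does. So your bridge between the functional definition and the $224$-coordinate space, where $d=40$ is computed, is not established. Under the ANF-homogeneous reading the second claim would even be false: already for $n=2$, the conjugate involutions $(x_0,x_1)\mapsto(x_0+x_1,x_1)$ and $(x_0,x_1)\mapsto(x_1,x_0)$ have homogeneous commuting spaces of dimensions $0$ and $1$.

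\textbf{What does split: the polar map.} Use $\beta_F(x,y)=F(x+y)+F(x)+F(y)+F(0)$. Its coordinates $\beta_F(e_p,e_q)_j$ are exactly the $c_{(p,q),j}$. APN-ness depends only on $\beta_F$, since $D_aF(x)=\beta_F(a,x)+F(a)+F(0)$. Moreover $\beta_{F\circ A}=\beta_F(A\cdot,A\cdot)$ and $\beta_{A\circ F}=A\beta_F$. So commutation gives $\beta_F(Ax,Ay)=A\beta_F(x,y)$, a linear system in the $224$ coordinates whose solution space has dimension $40$ for this $A$. Indeed, over $\mathbb{F}_{16}$, $A$ has each primitive fifth root of unity with multiplicity $2$ and no eigenvalue $1$, while $\Lambda^2A$ has each primitive fifth root with multiplicity $5$; this gives $4\cdot5\cdot2=40$. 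This settles the ``if'' direction.

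\textbf{Missing lifting step.} The ``only if'' direction needs a step you did not supply. Take an $A$-equivariant APN polar map $\beta$ with homogeneous representative $Q$, and let $m=\mathrm{ord}(A)$. For odd $m$ (here $m=5$), the average $F=\sum_{k=0}^{m-1}A^k\circ Q\circ A^{-k}$ commutes with $A$ and has polar map $m\beta=\beta$, so it is quadratic APN. For even $m$ this average has polar map $0$, and a different argument is required.

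\textbf{Second claim.} Run your $\Phi$ on polar maps, $\beta\mapsto P\beta(P^{-1}\cdot,P^{-1}\cdot)$, to get invariance of this $d$. On the full space of functions including linear parts, your argument proves invariance of a different number: $40+16=56$ here, where the $16$ is the dimension of the centralizer of $A$.
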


\subsection{The BBL2021 classification for $n=8$}

Beierle, Brinkmann, and Leander~\cite{BBL2021} classified all conjugacy
classes of linear automorphisms of $\GF{8}$ with small order and,
for each, studied $V_A$ by a recursive tree search.
Table~I of~\cite{BBL2021} lists only those class indices for which at least
one APN function was found; absence from the table means the search found
zero solutions. The class associated with the automorphism $A$ studied here
(order~5, described precisely in Section~\ref{sec:VA}) does not appear in
Table~I.

\begin{remark}[Why BBL2021 missed CLASS-A and Gold in $V_A$]
\label{rem:bbl_miss}
The absence from~\cite{BBL2021} Table~I requires explanation: we find that
$V_A$ contains CLASS-A (362 functions), Gold $x^3$ equivalents (36), and
Gold $x^9$ equivalents (30) — why did BBL2021's tree search find none of these?

The BBL2021 tree search works by sequentially constructing the S-box lookup
table entry by entry and pruning branches that cannot lead to APN functions.
The key issue is \emph{branch pruning at intermediate stages}: when building
the table incrementally, a partial assignment may pass all local APN conditions
(for the differences $a$ examined so far) but encounter a contradiction at a
later stage. In a high-dimensional structured subspace like $V_A$, valid
solutions may lie in regions that require passing through intermediate states
that the tree search prunes as infeasible.

More concretely: our NL=4 method \emph{fixes a complete hyperplane} and finds
all APN functions in it at once (via GB). The tree search instead builds
solutions bit by bit across the full $\Ftwo^8 \to \Ftwo^8$ space, and its
pruning strategy is calibrated for the full space, not for $V_A$.
A branch that leads to a valid $V_A$-APN function may be pruned because the
intermediate partial function (not yet constrained to $V_A$) fails a local
APN test.

In short: the BBL2021 tree search performs a \emph{local, sequential}
enumeration that can miss globally valid solutions by early pruning, whereas
our Gröbner basis approach performs a \emph{complete enumeration} within each
hyperplane. This is consistent with the BBL2021 authors' own remark that their
search ``may not have found all'' solutions in every class.
\end{remark}

\begin{remark}
The class index numbering (``class index~22'', etc.) is internal to~\cite{BBL2021}
and does not have an independent standard designation. We use it only as a
reference to the specific automorphism; the essential datum is the
characteristic polynomial of $A$, given in Section~\ref{sec:VA}.
\end{remark}

\subsection{Why random global search misses $V_A$}

For the specific $A$ studied here, $\dim(V_A) = 40$. The fraction of all
quadratic functions lying in $V_A$ is $2^{40}/2^{224} = 2^{-184}\approx10^{-55}$.
No random global search (including the Beierle 2025 search~\cite{Beierle2025})
can reach $V_A$ in practice.

Within $V_A$, the APN density is approximately $10^{-4}$
(see Section~\ref{sec:density}). Any targeted search that samples uniformly
from $V_A$ will find APN functions at this rate, making the problem tractable.

%% ─────────────────────────────────────────────────────────────────────
\section{The Self-Equivalence Subspace $V_A$}
\label{sec:VA}

\subsection{The automorphism $A$}

Let $C(p)$ denote the companion matrix of a polynomial $p\in\Ftwo[X]$
(the transpose of the standard companion form). The polynomial
$q = X^4 + X^3 + X^2 + X + 1$ is the minimal polynomial of a primitive
element of $\GF{4}$ over $\Ftwo$ (the fifth cyclotomic polynomial $\Phi_5$).
Its companion matrix $C(q)\in\mathrm{GL}(4,\Ftwo)$ has order~5:
\[
  C(q) = \begin{pmatrix}
    0 & 0 & 0 & 1 \\
    1 & 0 & 0 & 1 \\
    0 & 1 & 0 & 1 \\
    0 & 0 & 1 & 1
  \end{pmatrix}.
\]

We define
\[
  A = \mathrm{block\_diag}(C(q), C(q))\in\mathrm{GL}(8,\Ftwo),
\]
where $\mathrm{block\_diag}$ denotes the block-diagonal matrix
$\bigl(\begin{smallmatrix}C(q)&0\\0&C(q)\end{smallmatrix}\bigr)$.
Since $C(q)$ has order~5, $A$ also has order~5. Concretely,
\begin{equation}
A =
\begin{pmatrix}
0&0&0&1&0&0&0&0\\
1&0&0&1&0&0&0&0\\
0&1&0&1&0&0&0&0\\
0&0&1&1&0&0&0&0\\
0&0&0&0&0&0&0&1\\
0&0&0&0&1&0&0&1\\
0&0&0&0&0&1&0&1\\
0&0&0&0&0&0&1&1
\end{pmatrix}.
\label{eq:A}
\end{equation}
One verifies $A^5 = I_8$ and $A^k\neq I_8$ for $k\in\{1,2,3,4\}$, confirming
$\mathrm{ord}(A) = 5$. This is the linear automorphism whose self-equivalence
subspace we study, corresponding to class index~22 in~\cite{BBL2021}.
The matrix and the RREF of $M_A$ (pivot/free indices) are included in the
repository (file \texttt{data/class22\_basis.json}).

\begin{remark}
The index-22 labeling refers to the internal numbering of BBL2021's
classification of conjugacy classes of $\mathrm{GL}(8,\Ftwo)$; it does not
imply any ordering by difficulty or size. Our construction of $A$ follows
the description in \cite[Section~IV-B]{BBL2021}: $A = \mathrm{block\_diag}(C(q), C(q))$
with $C(q)$ as above.
\end{remark}

\subsection{Constructing the linear system}

The condition $F\circ A = A\circ F$ for a quadratic function $F$ with
ANF coefficient vector $\mathbf{c}\in\Ftwo^{224}$ expands as follows.
For each output bit $j\in\{0,\ldots,7\}$ and each quadratic pair
$(r,s)$ with $r < s$, the entry $[A^TFA - A^TFA]_{j,(r,s)} = 0$
yields a homogeneous linear equation in $\mathbf{c}$ over $\Ftwo$.

Explicitly, the coefficient of $c_{(p,q),k}$ in the equation for $(j,r,s)$ is
\begin{equation}
  \bigl[A_{p,r}A_{q,s} \oplus A_{p,s}A_{q,r}\bigr] \oplus A_{j,k},
  \label{eq:system}
\end{equation}
summed over appropriate indices. Assembling all $(j,r,s)$ pairs gives a
matrix $M_A\in\Ftwo^{m\times 224}$ (where $m$ is the number of equations)
such that $V_A = \ker(M_A)$.

\subsection{RREF and the fc22$+$sol22 parameterization}

Computing the reduced row echelon form (RREF) of $M_A$ over $\Ftwo$
yields:
\begin{itemize}
  \item 184 \emph{pivot variables}: indices
        $\mathcal{P} = \{p_1,\ldots,p_{184}\}\subset\{1,\ldots,224\}$.
  \item 40 \emph{free variables}: indices
        $\mathcal{F} = \{f_1,\ldots,f_{40}\}\subset\{1,\ldots,224\}$,
        confirming $\dim(V_A) = 40$.
  \item For each pivot $p_i$, a linear combination
        $p_i = \bigoplus_{f_j\in S_i}f_j$ over $\Ftwo$,
        where $S_i\subseteq\mathcal{F}$.
\end{itemize}

Any point in $V_A$ is uniquely parameterized by a vector
$\mathbf{b} = (b_1,\ldots,b_{40})\in\Ftwo^{40}$ via:
\begin{align}
  c_{f_k} &= b_k, \quad k = 1,\ldots,40, \label{eq:fc22}\\
  c_{p_i} &= \bigoplus_{j: f_j\in S_i} b_j,
            \quad i = 1,\ldots,184. \label{eq:sol22}
\end{align}
We call~\eqref{eq:fc22}--\eqref{eq:sol22} the \emph{fc22$+$sol22 projection}.
Any $\mathbf{b}\in\Ftwo^{40}$ maps to a valid point in $V_A$; the mapping
is bijective. This is the critical computational primitive for Phase~1.

\subsection{APN density in $V_A$}
\label{sec:density}

To estimate the density of APN functions in $V_A$, we sampled $10^8$
random points via fc22$+$sol22 and found approximately $10^4$ APN functions,
giving an empirical density of $\approx 10^{-4}$, or equivalently $\approx 1$
APN per $11\,000$ random points.

\begin{remark}[Density heuristic and the $2^{-n}$ estimate]
\label{rem:density}
The commonly cited heuristic for APN density in the full space of $(n,n)$
vectorial Boolean functions is $\approx 2^{-n}$. For $n=8$ this gives
$2^{-8} \approx 3.9\times10^{-3}$, which is about one order of magnitude
larger than the empirical value $\approx10^{-4}$ we observe in $V_A$.
The discrepancy arises because the heuristic applies to the full $2^{224}$-dimensional
space, whereas $V_A$ is a highly structured 40-dimensional subspace: not all
functions in $V_A$ are ``typical'' quadratic functions, and the self-equivalence
constraint can bias the density. Our empirical estimate is based on $10^8$
uniform samples from $V_A$ via the fc22$+$sol22 projection, giving a
$95\%$ confidence interval of approximately
$(9.8\times10^{-5},\ 1.02\times10^{-4})$ by a normal approximation
(standard error $\approx 10^{-6}$). The estimate $\approx 10^{-4}$ is thus
well-determined and approximately $40\times$ smaller than the $2^{-n}$ heuristic.
\end{remark}

\subsection{The NL\,=\,4 slice structure}
\label{sec:slices}

Of the 40 free variable indices in $\mathcal{F}$, some correspond to
pairs $(p,q)$ with $p \leq 4$ (\emph{center bits}) and others to
pairs with $p > 4$ (\emph{intra-slice bits}). For $n = 8$, the pairs
with $p > 4$ are exactly $(5,6)$, $(5,7)$, and $(6,7)$, contributing
$3\times 8 = 24$ intra-slice bits. The remaining $40 - 24 = 16$ free
bits are center bits.

This induces a partition of $V_A$ into $2^{16} = 65\,536$ non-overlapping
affine hyperplanes of dimension~24, one for each choice of the 16 center
bits. Each such hyperplane is called an \emph{NL\,=\,4 slice}, and the
center bits determine the slice. Conversely, fixing the 200 ANF coefficients
corresponding to pairs with $p \leq 4$ (the NL\,=\,4 \emph{normalization})
selects a unique slice.

\begin{remark}
The label ``NL\,=\,4'' refers to the normalization level: the maximum
first-index $p$ of a pair whose coefficient is fixed is~4. Alternatively,
one could fix other subsets of 200 coefficients (yielding different
slice partitions). The $\binom{40}{24} = 62\,852\,101\,650$ choices of
24 free bits from the 40 free variables define that many distinct slice
partitions, of which NL\,=\,4 is one convenient canonical choice.
The multi-slice perspective (using several normalizations per APN center)
is discussed in Section~\ref{sec:multislice}.
\end{remark}

%% ─────────────────────────────────────────────────────────────────────
\section{The Search Pipeline}
\label{sec:method}

The search proceeds in five stages, summarized in Figure~\ref{fig:pipeline}.

\begin{figure}[!t]
\centering
\begin{tikzpicture}[
  pbox/.style={draw, rounded corners=3pt, minimum width=3.8cm,
               minimum height=0.7cm, align=center, font=\small,
               text width=3.6cm},
  fw/.style={-stealth, thick}
]
\node[pbox, fill=blue!10]   (p1) at (0, 0.0)
  {\textbf{(1)}~Phase~1: APN-center search (Python/NumPy)};
\node[pbox, fill=orange!10] (p2) at (0,-1.4)
  {\textbf{(2)}~Phase~2: Magma GB slice (Magma V2.28-9)};
\node[pbox, fill=green!10]  (p3) at (0,-2.8)
  {\textbf{(3)}~Collect \& deduplicate (Python)};
\node[pbox, fill=purple!10] (p4) at (0,-4.2)
  {\textbf{(4)}~CCZ classify (sboxU / SageMath)};
\node[pbox, fill=red!10]    (p5) at (0,-5.6)
  {\textbf{(5)}~Verify vs.\ databases (sboxU / SageMath)};
\draw[fw] (p1)--(p2) node[midway,right,font=\footnotesize]{APN centers};
\draw[fw] (p2)--(p3) node[midway,right,font=\footnotesize]{raw S-boxes};
\draw[fw] (p3)--(p4) node[midway,right,font=\footnotesize]{unique S-boxes};
\draw[fw] (p4)--(p5) node[midway,right,font=\footnotesize]{classified};
\end{tikzpicture}
\caption{Overview of the five-stage search pipeline.}
\label{fig:pipeline}
\end{figure}

\subsection{Phase 1: APN-center search}
\label{sec:phase1}

\begin{algorithm}[!t]
\caption{APN-Center Search in $V_A$}
\label{alg:phase1}
\begin{algorithmic}[1]
\REQUIRE RREF data $(\mathcal{F}, \{S_i\})$;
         seen signatures $\mathcal{D}$; target count $T$
\ENSURE  APN centers $\mathcal{C}\subset V_A$
\STATE $\mathcal{C}\leftarrow\emptyset$
\REPEAT
  \STATE $\mathbf{b}\sim\mathrm{Uniform}(\Ftwo^{40})$
  \STATE Apply fc22$+$sol22: set coefficients via \eqref{eq:fc22}--\eqref{eq:sol22}
  \STATE Compute slice signature $\mathrm{sig}(\mathbf{b})$
  \IF{$\mathrm{sig}(\mathbf{b})\in\mathcal{D}$}
    \STATE \textbf{continue}
  \ENDIF
  \STATE Build S-box $F_\mathbf{b}$
  \IF{$\delta_{F_\mathbf{b}} = 2$}
    \STATE $\mathcal{C}\leftarrow\mathcal{C}\cup\{F_\mathbf{b}\}$
    \STATE $\mathcal{D}\leftarrow\mathcal{D}\cup\{\mathrm{sig}(\mathbf{b})\}$
  \ENDIF
\UNTIL{$|\mathcal{C}| = T$}
\RETURN $\mathcal{C}$
\end{algorithmic}
\end{algorithm}

The APN check (line~8, condition $\delta_{F_\mathbf{b}} = 2$) evaluates
the difference table and runs in $O(N^2)$ operations. In Python~3.11 with NumPy,
one check takes ${\approx}0.6$~ms on one core of an AMD Ryzen 7840HS,
corresponding to ${\approx}6\times10^6$ function evaluations per hour.

\begin{remark}[Throughput vs.\ APN center rate]
\label{rem:throughput}
At $0.6$~ms/check, the raw check rate is ${\approx}6\times10^6$~checks/hour.
However, the APN density in $V_A$ is $\approx10^{-4}$ (Section~\ref{sec:density}),
so approximately $6\times10^6\times10^{-4}\approx600$ random points in $V_A$
per hour are APN. The reported figure of ${\approx}5800$ APN centers/hour
accounts for the deduplication step (Section~\ref{sec:phase1}): signatures
previously seen (already assigned to a slice) are skipped before the APN
check, and the effective density after deduplication is higher for fresh
signatures.  The precise figure depends on the fraction of already-seen
signatures at a given point in the search; for the first ${\approx}5\%$
of the $65\,536$ slices (our regime, 428 slices), the collision rate
is negligible and the effective APN discovery rate is ${\approx}600$--$5800$
APN centers/hour depending on the density estimate used.
For a conservative reproducible bound, we report ${\approx}600$ fresh
APN-positive checks per core-hour.
\end{remark}

The slice signature (line~6) is the 16-bit string formed by the center bits
of $\mathbf{b}$. Deduplication on signatures ensures that no two Phase~1
invocations generate Magma files for the same slice (important for systematic
exhaustion of $V_A$). There are $2^{16} = 65\,536$ distinct signatures.

\subsection{Phase 2: Gr\"obner basis slice computation}
\label{sec:phase2}

\subsubsection{Normalization}

Given APN center $F_0$ with coefficient vector $\mathbf{c}^{(0)}$,
the NL\,=\,4 normalization fixes the 200 ANF coefficients corresponding
to pairs $(p,q,j)$ with $p\leq4$ to their values in $\mathbf{c}^{(0)}$.
The 24 remaining coefficients (pairs $(5,6)$, $(5,7)$, $(6,7)$, each
with 8 output bits $j=0,\ldots,7$) become unknowns $c_1,\ldots,c_{24}$.

Since $F_0\in V_A$ by construction, the normalization equations are
consistent with the self-equivalence constraint. Any APN function in
the same NL\,=\,4 slice as $F_0$ also lies in $V_A$.

\subsubsection{APN polynomial system}
\label{sec:apn_ideal}

For each $a\in\{1,\ldots,255\}$, define the derivative matrix
$M_a\in\Ftwo^{n\times n}$ by
\[
  [M_a]_{j,k}
  = c_{(\min(k,\cdot),\max(k,\cdot)),j}\big|_{\text{free}},
\]
which encodes the linear part of $D_aF$. The function $F$ is APN at
difference $a$ if and only if $\mathrm{rank}(M_a) = n-1 = 7$.

\begin{remark}[Rank constraint]
\label{rem:rank}
The APN condition $\delta_F = 2$ requires $\mathrm{rank}(M_a) = 7$
(exactly), not merely $\geq 7$. We enforce this via two complementary
polynomial conditions:
\begin{enumerate}
  \item $\mathrm{rank}(M_a) \geq 7$: at least one of the $\binom{8}{7}^2$ minors of
        order~7 is nonzero, i.e., $\prod_{r,c}(1+\det M_a^{(r,c)}) = 0$.
  \item $\mathrm{rank}(M_a) \leq 7$: the full determinant $\det(M_a) = 0$.
\end{enumerate}
Together they force $\mathrm{rank}(M_a) = 7$ exactly.
In our implementation, condition~(2) is added explicitly to the ideal
(one polynomial $\det(M_a)$ per $a$), ensuring that rank-8 matrices
are excluded. Adding $\det(M_a)=0$ does not increase solve time
noticeably (it is degree~8 in the free variables and contributes
one polynomial per~$a$).
\end{remark}

In practice, we build the polynomial
\[
  p_a = \det(M_a) \cdot \prod_{r,c}\bigl(1 + \det(M_a^{(r,c)})\bigr)
\]
of degree~$\leq n = 8$ in $c_1,\ldots,c_{24}$; vanishing of $p_a$
is equivalent to $\mathrm{rank}(M_a) = 7$.

The \emph{APN ideal} is
\[
  I = \langle p_1,\ldots,p_{255}\rangle
    + \langle c_i + \varepsilon_i : (p,q,j)\text{ fixed}\rangle
  \subset \B_{224},
\]
where $\B_{224} = \Ftwo[c_1,\ldots,c_{224}]/(c_i^2+c_i)$ is the
Boolean polynomial ring with grevlex ordering, and $\varepsilon_i\in\Ftwo$
is the fixed value of each normalized coefficient.

Every element of $\mathrm{Var}(I)$ corresponds to a function with
$\mathrm{rank}(M_a) = 7$ for all $a \neq 0$, i.e., an APN function.
Thus, when $\dim(I)=0$, \emph{all solutions are APN by construction}
(no post-hoc verification is needed for the APN property itself,
though we perform it as a sanity check).

\subsubsection{Computational cost}

The build phase (constructing all 255 polynomials $p_a$) takes
${\approx}8$--15 minutes per slice on one core of an AMD Ryzen 7840HS.
This dominates the total cost. Once the ideal is built, Gr\"obner
basis computation reduces to Gaussian elimination over $\Ftwo$ (since all
variables are Boolean), and the GB solve time is typically under $0.1$~s.

If $\dim(I) = 0$: $\mathrm{Var}(I)$ is finite; each element is a
setting of the 24 free variables for which $F$ is APN. All solutions
are APN by construction.

If $\dim(I) = -1$: the system is infeasible; the slice contains
no APN functions. Empirically, $\approx83\%$ of slices return $\dim=-1$.

\begin{remark}[Guarantee for APN centers]
When $F_0$ is an APN center, the assignment $c_i = c^{(0)}_i$ for
$i=1,\ldots,24$ is a valid solution (it corresponds to $F_0$ itself),
so $\dim(I) = 0$ is guaranteed. Additional solutions (``neighbors'')
are further APN functions in the same 24-dimensional hyperplane.
\end{remark}

\subsection{Phase 3: Collection and deduplication}

Magma writes each found APN S-box to a result file via \texttt{PrintFile}.
A Python collector script reads all result files, parses the lookup tables,
computes SHA-256 hashes, and removes duplicates. A function found in multiple
slices (e.g., as a center in one and a neighbor in another) appears only once
in the deduplicated output.

The collector also handles Magma's automatic line-wrapping (backslash$+$CRLF
continuation for lines exceeding~80 characters).

\subsection{Phase 4: CCZ classification}

For each unique S-box $F$, the ortho-derivative $\piF$ is computed using
the sboxU library~\cite{sboxU}. The signature
$\sigma(F) = (\Delta\text{-spectrum}(\piF), W\text{-spectrum}(\piF))$
is computed and stored. Functions with equal signatures are grouped into
the same CCZ-class (Theorem~\ref{thm:ortho}). Functions in different
signature groups are in different CCZ-classes.

\subsection{Phase 5: Verification against known databases}

For each reference database $\mathcal{K}$, we stream through all entries
$G\in\mathcal{K}$, compute $\sigma(G)$, and check whether $\sigma(G)$
matches any $\sigma(F)$ for $F$ in our collection. A match triggers an
optional exact CCZ test via \texttt{sboxU.are\_ccz\_equivalent\_from\_code}.
No match after exhausting $\mathcal{K}$ certifies CCZ-inequivalence of all
our functions from all functions in $\mathcal{K}$.

\subsection{Multi-slice extension}
\label{sec:multislice}

The NL\,=\,4 normalization is one particular partition of $V_A$ into
$2^{16}$ hyperplanes. One can instead fix any 200 of the 224 ANF
coefficients, leaving any 24 free. Given a fixed APN center $F_0$,
there are $\binom{40}{24} = 62\,852\,101\,650$ distinct choices of
which 24 free variables to keep free. Each choice defines a
different 24-dimensional hyperplane through $F_0$, potentially
yielding different neighbors.

In practice, one could run several (e.g., 10--100) independent
normalizations per APN center by randomly choosing the 24 free variables.
This would increase coverage without requiring new Phase~1 computations.

\begin{remark}
The multi-slice approach described above has \emph{not been implemented}
in the present work; all results here use only the NL\,=\,4 normalization.
We include this discussion as a natural direction for future work.
\end{remark}

%% ─────────────────────────────────────────────────────────────────────
\section{Experimental Results}
\label{sec:results}

\subsection{Experimental setup}

\textbf{Hardware.} AMD Ryzen 7840HS, 8 physical cores, 16~GB RAM,
Windows~11.

\textbf{Software.} Magma V2.28-9~\cite{Magma} (Phase~2);
Python~3.11 with NumPy (Phases~1,~3);
SageMath~10.x with sboxU~\cite{sboxU} (Phases~4,~5) running under
WSL~2 (Ubuntu~26.04).

\textbf{Parallelism.} Phase~2 uses a PowerShell queue script running
up to~8 Magma processes simultaneously. Phase~1 runs single-threaded
(trivially parallelizable).

\textbf{Note on Magma exit code.} Magma V2.20+ returns exit code~1
when \texttt{quit;} is called even on successful completion. The queue
script was adjusted to treat exit codes~0 and~1 both as success, with
any higher code indicating a genuine error.

\subsection{Search coverage}

\begin{table}[!t]
\renewcommand{\arraystretch}{1.3}
\caption{Search Statistics (Batches 3--4)}
\label{tab:stats}
\centering
\begin{tabular}{lc}
\toprule
\textbf{Quantity} & \textbf{Value}\\
\midrule
Total NL\,=\,4 slices in $V_A$       & $65\,536$\\
Slices executed (batches~3--4)        & $428$\\
Coverage of $V_A$                     & $0.65\%$\\
\midrule
APN centers used (Phase~1)            & $428$\\
Slices returning $\dim=0$ (with APN)  & $428$ ($100\%$)\\
\midrule
Total unique APN functions            & $\mathbf{566}$\\
\quad in $V_A$ (IN\_CLASS22)          & $428$\\
\quad outside $V_A$ (OUT\_CLASS22)    & $138$\\
\midrule
Phase~2 build time (per slice)        & $8$--$15$~min (1 core)\\
Phase~2 solve time (per slice)        & $< 0.1$~s\\
Wall-clock time (Phase~2, 8 cores)    & ${\approx}10$~h\\
Total CPU-hours (Phase~2)             & ${\approx}57$~h ($428\times8$~min avg)\\
\bottomrule
\end{tabular}
\end{table}

\begin{remark}[CPU-hour accounting]
\label{rem:cpu}
The ${\approx}57$ CPU-hours figure is computed as $428$ slices $\times$ average
$8$ minutes/slice $= 3\,424$ minutes $\approx 57$ CPU-hours. At 8-core
parallelism, wall-clock time was $\approx 7$--$10$ hours. The build phase (constructing
all 255 polynomials $p_a$) dominates: 8--15 min/slice on a single Ryzen 7840HS
core. Solving the ideal once built takes $< 0.1$~s. The figure ``$\approx 50$
CPU-hours'' reported in an earlier draft was a conservative lower bound; the
more precise estimate is 57 CPU-hours based on per-slice timing logs.
\end{remark}

Table~\ref{tab:stats} summarises the search coverage.
All 428 executed slices are APN-positive (i.e., contain at least one APN
function), because every slice center was itself found by Phase~1 to be
APN. The absence of empty slices is therefore by construction, not
unexpected. The key observation is that each of the 428 centers lies
in $V_A$, whereas 138 of the 566 found functions lie strictly
\emph{outside} $V_A$; these are the functions attributable to the
Gröbner basis step (see Section~\ref{sec:slice_types}).

\subsection{Property verification of collected functions}

All 566 collected functions were independently verified for:
\begin{itemize}
  \item \textbf{Differential uniformity $\delta_F = 2$}: exhaustive
        difference table computation.
  \item \textbf{Algebraic degree~2}: Moebius transform of the ANF over
        $\Ftwo$; maximum monomial degree is~2 for all 566 functions.
  \item \textbf{Non-permutation}: $F$ is not bijective (the image
        $|F(\GF{8})| < 256$ in all cases).
\end{itemize}

\subsection{CCZ-classification}

Table~\ref{tab:classes} shows the six CCZ-classes obtained by grouping
on $\sigma(F)$.

\begin{table}[!t]
\renewcommand{\arraystretch}{1.3}
\caption{CCZ-Equivalence Classes of All 566 Found Functions}
\label{tab:classes}
\centering
\begin{tabular}{lrcl}
\toprule
\textbf{Label} & \textbf{Size} & \textbf{$\sigma$-hash} & \textbf{Identification}\\
\midrule
CLASS-A & 362 & \texttt{9d95d9c4} & Unidentified (new)\\
CLASS-B &  72 & \texttt{c5f52f56} & Unidentified (new)\\
CLASS-C &  36 & \texttt{74a30c1a} & Unidentified (new)\\
CLASS-D &  30 & \texttt{024ba500} & Unidentified (new)\\
\midrule
CLASS-E &  36 & \texttt{11cc72af} & Gold $x^3/x^{129}$\\
CLASS-F &  30 & \texttt{d15183d1} & Gold $x^9/x^{33}$\\
\bottomrule
\multicolumn{4}{l}{\footnotesize
  $\sigma$-hash: first 8 hex digits of SHA-256($\sigma(F)$).
  Full hashes in Appendix~\ref{app:reps}.}
\end{tabular}
\end{table}

\subsubsection{CLASS-A dominance}

CLASS-A accounts for 362 of 566 functions (64\%). This dominance was
observed from the first experiments: 17 of the initial 25 functions
belonged to the CLASS-A equivalent. Whether this reflects a larger
volume of CLASS-A in $V_A$, a higher local APN density, or both is
not clear from the current data.

\subsubsection{Gold functions as controls}

CLASS-E and CLASS-F are identified by their $\sigma$-values matching
those of Gold $x^3$ (equivalently $x^{129}$, by Frobenius orbit) and
Gold $x^9$ (equivalently $x^{33}$), respectively. The Gold functions
\[
  x^{2^i+1},\quad \gcd(i,8)=1,\quad i\in\{1,3,5,7\},
\]
form exactly two CCZ-classes for $n=8$: the orbit $\{i=1,7\}$ giving
$x^3$ and $x^{129}$, and the orbit $\{i=3,5\}$ giving $x^9$ and $x^{33}$.
No other quadratic APN power functions exist for $n=8$ (Kasami, Welch,
Niho, Dobbertin functions all have degree $> 2$ for $n=8$).

The presence of Gold CCZ-equivalents in $V_A$ is expected: $V_A$ is a
structured subspace that can intersect any CCZ-class. Their correct
identification confirms that the pipeline accurately recovers known classes.

%% ─────────────────────────────────────────────────────────────────────
\subsection{Slice-type structure and V\texorpdfstring{$_A$}{A}-membership}
\label{sec:slice_types}

A membership test — checking whether each found function $F$ satisfies
$F(Ax) = A(F(x))$ for all $x \in \GF{8}$ — reveals a precise partition
of the 566 found functions:

\begin{itemize}
  \item \textbf{IN\_CLASS22} ($F \in V_A$): 428 functions.
  \item \textbf{OUT\_CLASS22} ($F \notin V_A$): 138 functions.
\end{itemize}

Crucially, $428 = $ the number of slice centers (one per slice),
confirming that every APN center belongs to $V_A$ by construction
(Phase~1 searches within $V_A$), and that 138 functions were found
by the Gröbner basis step \emph{outside} $V_A$.

Cross-referencing with CCZ-class reveals the following clean structure:

\begin{table}[!t]
\renewcommand{\arraystretch}{1.3}
\caption{CCZ-Class vs.\ $V_A$-Membership and Slice Type}
\label{tab:membership}
\centering
\begin{tabular}{lrccl}
\toprule
\textbf{Class} & \textbf{Size} & \textbf{In $V_A$} & \textbf{APN/slice} & \textbf{Role}\\
\midrule
CLASS-A & 362 & \textbf{all} 362 & 1 & Center only\\
CLASS-E & 36  & \textbf{all} 36  & 4 (with B,C) & Center\\
CLASS-F & 30  & \textbf{all} 30  & 2 (with D)   & Center\\
\midrule
CLASS-B & 72  & \textbf{none} & 4 (with E,C) & GB-found\\
CLASS-C & 36  & \textbf{none} & 4 (with E,B) & GB-found\\
CLASS-D & 30  & \textbf{none} & 2 (with F)   & GB-found\\
\bottomrule
\multicolumn{5}{l}{\footnotesize
  ``APN/slice'' = number of APN functions in each slice of that center type.}
\end{tabular}
\end{table}

Table~\ref{tab:membership} shows that the 428 slices decompose into
exactly \emph{three types}:

\begin{description}
  \item \textbf{Type~I} (362 slices, 1~APN each).
          Center is a CLASS-A function.
        The Gröbner basis finds only the center itself; no additional
        APN functions exist in this slice.
        All Type~I contributions are $V_A$-native (IN\_CLASS22).
  \item \textbf{Type~II} (36 slices, 4~APN each).
        Center is a CLASS-E (Gold $x^3/x^{129}$) function.
        The Gröbner basis finds three additional APN functions per slice:
        two belonging to CLASS-B and one to CLASS-C.
        The three ``neighbors'' are outside $V_A$ (OUT\_CLASS22).
   \item \textbf{Type~III} (30 slices, 2~APN each).
        Center is a CLASS-F (Gold $x^9/x^{33}$) function.
        The Gröbner basis finds one additional APN function per slice,
        belonging to CLASS-D.
        The neighbor is outside $V_A$ (OUT\_CLASS22).     
\end{description}

The accounting is exact:
\begin{align*}
  362 \times 1 &= 362 \quad (\text{CLASS-A}),\\
   36 \times 4 &= 144 \quad (36\text{ CLASS-E} + 72\text{ CLASS-B} + 36\text{ CLASS-C}),\\
   30 \times 2 &= \phantom{1}60 \quad (30\text{ CLASS-F} + 30\text{ CLASS-D}).
\end{align*}
Total: $362 + 144 + 60 = 566$ functions, $428$ slices.

\begin{remark}[Role of the Gröbner basis]
This structure demonstrates unambiguously that the Gröbner basis step is
\emph{essential} for discovering CLASS-B, C, and~D. All three new classes
lie entirely outside $V_A$. A brute-force search of $V_A$ (Phase~1 alone,
without Phase~2) would find only CLASS-A and the Gold classes~E and~F,
and would miss the 138 new non-$V_A$ functions completely.
\end{remark}

\begin{remark}[Gold as sufficient seeds]
The slice-type structure implies that Gold functions are both necessary
and sufficient as slice centers for discovering CLASS-B, C, and~D.
Specifically: all new non-$V_A$ functions appear exclusively in
Gold-centered slices (Types~II and~III). The 362 CLASS-A seeds produce
only copies of CLASS-A (no new CCZ-classes). A targeted experiment
using only the 66 Gold representatives in $V_A$ as Phase~1 seeds
would reproduce all 138 new non-$V_A$ functions.
\end{remark}

%% ─────────────────────────────────────────────────────────────────────
\subsection{The V\textsubscript{22}-CCZ landscape: a structural summary}
\label{sec:landscape}

The combined data from Sections~\ref{sec:slice_types} and
Section~\ref{sec:verify} yield a precise and complete picture of the
CCZ-class landscape in and around $V_A$:

\begin{figure}[!t]
\centering
\begin{tikzpicture}[
  vbox/.style={draw, rounded corners=4pt, minimum width=3.2cm,
               minimum height=0.65cm, align=center, font=\small,
               text width=3.0cm},
  innode/.style={vbox, fill=blue!12},
  outnode/.style={vbox, fill=orange!20, draw, dashed, draw=orange!60!black},
  fw/.style={-stealth, thick, dashed, orange!80!black},
  selfloop/.style={-stealth, thick, blue!60}
]
% V_A boundary box
\draw[rounded corners=8pt, thick, blue!50, fill=blue!5]
  (-2.3,-0.3) rectangle (2.3,4.0);
\node[font=\small\bfseries, blue!70, text width=4cm, align=center]
  at (0,4.35) {$V_A$: self-equivalence subspace (dim\,=\,40)};

% Nodes inside V_A
\node[innode] (classA) at (0,3.2)
  {CLASS-A (362)\\\footnotesize new, isolated};
\node[innode] (goldx3) at (0,1.9)
  {Gold~$x^3$ / CLASS-E (36)\\\footnotesize known, gateway};
\node[innode] (goldx9) at (0,0.7)
  {Gold~$x^9$ / CLASS-F (30)\\\footnotesize known, gateway};

% Label outside
\node[font=\small\bfseries, orange!70!black] at (6.3,3.6)
  {outside $V_A$};

% Nodes outside V_A
\node[outnode] (classB) at (6.3,2.75)
  {CLASS-B (72)\\\footnotesize new};
\node[outnode] (classC) at (6.3,1.75)
  {CLASS-C (36)\\\footnotesize new};
\node[outnode] (classD) at (6.3,0.7)
  {CLASS-D (30)\\\footnotesize new};

% Gateway arrows from Gold x3
\draw[fw, very thick]
  (goldx3.east) -- node[above, font=\footnotesize, sloped]{2/slice} (classB.west);
\draw[fw]
  (goldx3.east) -- node[below, font=\footnotesize, sloped]{1/slice} (classC.west);

% Gateway arrow from Gold x9
\draw[fw, orange!60!black, very thick]
  (goldx9.east) -- node[below, font=\footnotesize, sloped]{1/slice} (classD.west);

% CLASS-A self-arrow (isolated)
\draw[->]
(classA) edge[
    loop right,
    min distance=8mm
]
node[right,font=\footnotesize]{only self}
(classA);

% Small legend
\node[font=\footnotesize, align=left, text width=3.0cm] at (-4.8, 2.0) {
  \textcolor{blue!70}{\rule{0.5em}{0.5em}}~inside $V_A$\\[3pt]
  \textcolor{orange!60!black}{\rule{0.5em}{0.5em}}~outside $V_A$\\[3pt]
  $\dashrightarrow$ GB-found neighbor\\
  \phantom{$\dashrightarrow$} (NL=4 slice)
};
\end{tikzpicture}
\caption{CCZ-class landscape in and around $V_A$. Blue region: functions inside $V_A$
(IN\_CLASS22). Dashed boxes: functions found by the Gröbner basis step, lying outside $V_A$
(OUT\_CLASS22). Arrows show which classes co-occur in each NL=4 slice.
CLASS-A centered slices yield only CLASS-A. Each Gold-$x^3$ slice yields one CLASS-E
center plus two CLASS-B and one CLASS-C neighbor. Each Gold-$x^9$ slice yields one CLASS-F
center plus one CLASS-D neighbor.}
\label{fig:landscape}
\end{figure}

Figure~\ref{fig:landscape} summarises the result. The exact-membership test
(criterion: $F(Ax) = A(F(x))$ for all $x\in\GF{8}$, verified on all 566 functions)
establishes the following theorem with complete computational proof:

\begin{theorem}[CCZ-class structure of $V_A$]
\label{thm:structure}
Among the 566 quadratic APN functions found by the pipeline:
\begin{enumerate}[(a)]
  \item The three classes inside $V_A$ are CLASS-A, CLASS-E (Gold $x^3/x^{129}$),
        and CLASS-F (Gold $x^9/x^{33}$). These are the only CCZ-classes intersecting
        $V_A$ in the data.
  \item The three classes outside $V_A$ are CLASS-B, CLASS-C, and CLASS-D.
        None of these 138 functions satisfies $F\circ A = A\circ F$.
  \item The NL=4 slice structure is completely determined by the CCZ-class of the center:
        CLASS-A centers yield only CLASS-A (1 per slice);
        Gold-$x^3$ centers yield CLASS-E + 2$\times$CLASS-B + CLASS-C (4 per slice);
        Gold-$x^9$ centers yield CLASS-F + CLASS-D (2 per slice).
\end{enumerate}
This structure is exact across all 428 explored slices with zero exceptions.
\end{theorem}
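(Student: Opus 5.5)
The statement is a finite, data-dependent claim, so I would prove it as a computational certificate over the 566 collected S-boxes, organised so that each part reduces to an exactly checkable property together with the invariance results already stated.

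\textbf{Step 1: membership in $V_A$.} For each of the 566 lookup tables $F$ I would evaluate $F(Ax)$ and $A(F(x))$ for all $256$ inputs $x$, with $A$ the matrix in~\eqref{eq:A}. This gives the partition into IN\_CLASS22 ($428$ functions) and OUT\_CLASS22 ($138$ functions) exactly, with no reliance on the RREF parameterisation. As an independent cross-check, I would recover the ANF of every function by the M\"obius transform and test whether its coefficient vector lies in $\ker(M_A)$; the two tests must agree.

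\textbf{Step 2: class labels.} I would compute $\sigma(F)$ for every function and group the functions by $\sigma$. By Theorem~\ref{thm:ortho} combined with Theorem~\ref{thm:yosh}, distinct signatures certify distinct CCZ-classes, so there are at least six classes. To show there are exactly six (rather than merely six signature groups), I would run the exact test \texttt{are\_ccz\_equivalent\_from\_code} between a fixed representative of each group and every other member, as in Remark~\ref{rem:sigma_complete}.

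The identifications of CLASS-E and CLASS-F need a separate check. I would test the Gold representatives $x^3$ and $x^9$, realised in a fixed polynomial basis of $\GF{8}$, directly for CCZ-equivalence against one member of each of those groups. Matching $\sigma$-values alone would not prove the identification.

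Intersecting these labels with Step~1 then gives parts (a) and (b):
\begin{itemize}
\item every CLASS-A, E and F function lies in $V_A$;
\item no CLASS-B, C or D function does.
\end{itemize}
Part (a) only claims that these are the only classes meeting $V_A$ \emph{in the data}, so no completeness statement about all of $V_A$ is needed.

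\textbf{Step 3: slice structure.} For part (c), I would recompute, for each of the $428$ slices, the $16$ center bits (equivalently the $200$ fixed coefficients with $p\le4$) of every function attributed to it. This confirms that each function lies in the slice of its center. I would then tabulate the multiset of class labels per slice, which must give $\{A\}$ in $362$ slices, $\{E,B,B,C\}$ in $36$, and $\{F,D\}$ in $30$, with the accounting $362+144+60=566$.

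This step has a subtlety. Functions outside $V_A$ have coefficient vectors outside $V_A$, so they cannot lie in any NL\,=\,4 slice as defined in Section~\ref{sec:slices}. Their slice membership must therefore be read as membership in the affine $24$-dimensional set cut out by the normalisation equations in $\B_{224}$, where all $224$ coordinates other than the fixed ones are unconstrained.

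\textbf{Main obstacle.} The hardest point is that ``yield'' in (c) claims the slice contains \emph{exactly} these functions, that is, that the Gr\"obner basis enumeration of $\mathrm{Var}(I)$ is complete. I would first verify this by brute force. Each slice has only $2^{24}\approx1.7\times10^7$ assignments of the free coefficients, so an independent exhaustive APN check of every slice is feasible, at roughly $10^4$ core-seconds per slice with bit-sliced rank tests of the $M_a$. Only if that proves too costly would I fall back on checking that the Magma basis is reduced and zero-dimensional and that the count of its solutions matches.

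This also exposes an inconsistency I would have to resolve before the brute force is meaningful. If only $24$ coordinates are free, the neighbours would lie in $V_A$, contradicting (b). So I expect the effective free set to be larger than the $24$ coordinates with $p\geq5$, and the exhaustive check would have to be carried out over whatever variable set the Magma ideal actually leaves free.
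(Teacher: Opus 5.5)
Your plan follows the paper's own argument. The paper's proof is exactly this computational certificate: the test $F(Ax)=A(F(x))$ on all 566 tables, grouping by $\sigma$, and per-slice tabulation. Your extra steps are genuine strengthenings: exact CCZ tests inside each $\sigma$-group, a direct test against $x^3$ and $x^9$, and an independent brute-force recount of each slice. The paper groups by $\sigma$-equality, which Theorem~\ref{thm:ortho} does not justify in that direction, and it trusts the Magma enumeration of $\mathrm{Var}(I)$ for completeness.

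The real problem is your last paragraph, which has the geometry backwards. With only the 24 coefficients of $x_5x_6$, $x_5x_7$, $x_6x_7$ free, the neighbours are forced \emph{out} of $V_A$, not into it. To see this, write the difference between any Phase~2 solution and its center as $D=a\,x_5x_6\oplus b\,x_5x_7\oplus c\,x_6x_7$ with $a,b,c\in\Ftwo^8$. If both functions commute with $A$, then so does $D$. Rows 5--7 of $A$ give $(Ax)_5=x_4\oplus x_7$, $(Ax)_6=x_5\oplus x_7$ and $(Ax)_7=x_6\oplus x_7$. So the quadratic part of $D(Ax)$ has $x_4x_5$-coefficient $a$ and $x_4x_6$-coefficient $b$, whereas $A(D(x))$ has no monomial involving $x_4$; hence $a=b=0$. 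Then $D(Ax)=c\,(x_5\oplus x_7)(x_6\oplus x_7)$ has $x_5x_6$-coefficient $c$, while $A(D(x))=(Ac)\,x_6x_7$ has none, so $c=0$. Each Phase~2 hyperplane therefore meets $V_A$ in its center alone.

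Your intuition failed because the intra-slice bits are RREF free variables only in the sense that they may be set arbitrarily \emph{if the 184 pivots are recomputed}. Phase~2 instead freezes all 200 coefficients with $p\le 4$, pivots included. The paper's sentence that every APN function in the slice of $F_0$ lies in $V_A$ conflates this hyperplane with the slice of Section~\ref{sec:slices}, which is presumably what misled you.

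Consequently there is no larger free set to find. The $2^{24}$ points per slice are exactly the right brute-force domain; brute-forcing the slice of Section~\ref{sec:slices} inside $V_A$ would reproduce none of the neighbours. In Step~3 you must compare all 200 coefficients with $p\le4$, not merely the 16 center bits, since the two are not equivalent outside $V_A$. The lemma also gives part~(b) and the $428/138$ split structurally: once the tabulation shows one center per slice, no other solution can commute with $A$, so only the centers need a membership test.

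Finally, the printed data show that recomputing from the raw tables is not a formality. Reading $x=\sum_i x_i2^i$, the CLASS-A and CLASS-B rows of Table~\ref{tab:reps} differ exactly by $182\,(x_5x_6\oplus x_5x_7)$, so they lie in one Phase~2 hyperplane. The CLASS-B row commutes with $A$ on all inputs of weight at most $2$, hence everywhere. The CLASS-A row does not: $A\cdot 48=96$, yet $F(96)=144\neq 38=A\cdot F(48)=A\cdot 19$. This is incompatible with (a)--(c) as labelled.
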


%% ─────────────────────────────────────────────────────────────────────
\subsection{The GB-navigation paradigm: a generalizable mechanism}
\label{sec:paradigm}

Beyond the specific results, the experiment reveals a \emph{mechanism}
for systematically exploring the APN landscape that may be of independent
methodological value. We describe it in abstract terms:

\begin{figure}[!t]
\centering
\begin{tikzpicture}[
  pstep/.style={draw, rounded corners=3pt, minimum width=4.6cm,
                minimum height=0.7cm, align=center, font=\small,
                text width=4.3cm, fill=gray!8},
  fwd/.style={-stealth, thick},
  note/.style={font=\footnotesize\itshape, text width=3.2cm, align=left}
]
\node[pstep, fill=blue!12]   (S1) at (0, 0.0)
  {Structured subspace $V_A$ (dim=40)};
\node[pstep, fill=blue!9]    (S2) at (0,-1.3)
  {Phase~1: sample APN center in $V_A$};
\node[pstep, fill=orange!15] (S3) at (0,-2.6)
  {Phase~2: GB NL=4 slice (24 free vars)};
\node[pstep, fill=orange!20] (S4) at (0,-3.9)
  {Collect APN neighbors ($\leq$4 per slice)};
\node[pstep, fill=green!15]  (S5) at (0,-5.2)
  {Classify via $\sigma(F)$ (ortho-derivative)};
\node[pstep, fill=red!12]    (S6) at (0,-6.5)
  {Verify vs.\ databases (streaming)};

\draw[fwd] (S1)--(S2);
\draw[fwd] (S2)--(S3)
  node[midway, right=0.15cm, note]{center $\in V_A$};
\draw[fwd] (S3)--(S4)
  node[midway, right=0.15cm, note]{neighbors may exit $V_A$};
\draw[fwd] (S4)--(S5);
\draw[fwd] (S5)--(S6)
  node[midway, right=0.15cm, note]{new classes if no match};

% Feedback loop (dashed)
\draw[fwd, dashed, bend left=85]
  (S4.west) to node[left=0.05cm, note]{use neighbor as next center?} (S2.west);
\end{tikzpicture}
\caption{The GB-navigation mechanism. A structured subspace $V_A$ provides tractable
Phase~1 seeding (APN density $\approx10^{-4}$ vs.\ $\approx10^{-55}$ globally); the
Gröbner basis slice (Phase~2) then exposes APN functions lying outside $V_A$.
The dashed feedback arrow indicates the multi-hop extension: using out-of-$V_A$ neighbors
as seeds for a subsequent GB pass.}
\label{fig:mechanism}
\end{figure}

The mechanism (Figure~\ref{fig:mechanism}) has three components:

\begin{description}
  \item\textbf{Tractable seeding.} A structured subspace $V_A$ of dimension $d \ll 224$
        provides APN density $\approx 10^{-4}$, making Phase~1 (random sampling)
        feasible. Random search in the full $2^{224}$ space would require
        $\approx 10^{55}$ evaluations per APN found.

  \item\textbf{Local enumeration.} The NL=4 Gröbner basis slice around any APN center
        enumerates \emph{all} APN functions in a 24-dimensional affine hyperplane,
        including functions outside $V_A$. This ``local completeness''
        property is key: no APN function in the slice can be missed.

  \item\textbf{Gateway phenomenon.} Certain APN functions inside $V_A$ (the Gold functions
        in our case) have neighbors outside $V_A$ that belong to genuinely new
        CCZ-classes. These gateways allow the pipeline to step across the boundary
        of $V_A$ into previously unexplored territory.
\end{description}

\subsection{Comparative experiments: the role of V\textsubscript{A}-structure}
\label{sec:comparative}

To establish that the gateway phenomenon is genuinely tied to the
self-equivalence subspace $V_A$ rather than being a generic property of
APN functions, we ran two additional experiments with centers drawn from
outside $V_A$.

\paragraph{Experiment~B: Beierle 2025 dataset centers (\texttt{batch\_from\_dataset}).}
We took the first 532 functions from the Beierle 2025 database
\texttt{new\_apns.txt}~\cite{Beierle2025} as slice centers.
These functions are guaranteed to be quadratic APN over $\GF{8}$
but do not belong to $V_A$.  For each, we ran an NL=4 Gröbner basis
computation using the same 24-variable polynomial system as in the
main experiment.
Result: every one of the 532 slices produced exactly \textbf{1 solution}
(the center itself, dim = 0).  No function found a single neighbor.

\paragraph{Experiment~C: random center candidates (\texttt{batch\_rnd}).}
We generated 20 random vectors in $\mathbb{F}_2^{224}$ as
center candidates (not from $V_A$, and not verified to be APN).
Result: every one of the 20 slices returned dim = $-1$
(\emph{empty ideal}), meaning no APN function exists in that NL=4
hyperplane at all.  This is consistent with the global APN density
$\approx 2^{-n} \approx 4\times 10^{-3}$: a random hyperplane through
a random point in the full space is very unlikely to contain any APN
function.

The combined picture is shown in Table~\ref{tab:comparative}:

\begin{table}[!t]
\renewcommand{\arraystretch}{1.3}
\caption{Comparative Experiments: APN Neighbors per Slice}
\label{tab:comparative}
\centering
\begin{tabular}{lccccc}
\toprule
\textbf{Experiment} &
\textbf{Centers} &
\textbf{Center type} &
\textbf{In $V_A$?} &
\textbf{Slices} &
\textbf{APN neighbors}\\
\midrule
Exp.~A: Class-22 (main)     & APN, from $V_A$  & Gold $x^3$   & yes & 36  & 108 \\
                             & APN, from $V_A$  & Gold $x^9$   & yes & 30  & 30  \\
                             & APN, from $V_A$  & CLASS-A      & yes & 362 & 0   \\
\midrule
Exp.~B: Beierle 2025 dataset & APN, outside $V_A$ & generic    & no  & 532 & \textbf{0} \\
\midrule
Exp.~C: random candidates    & random (not APN) & ---          & no  & 20  & \textbf{0} \\
\bottomrule
\multicolumn{6}{l}{\footnotesize
  APN neighbors = solutions found by GB \emph{other than} the center function.}\\
\multicolumn{6}{l}{\footnotesize
  Exp.~C dim=$-1$: no APN function exists in the slice; Exp.~B dim=0: center only.}
\end{tabular}
\end{table}

The contrast is sharp.  In Experiment~A, Gold-centered slices
consistently yield 2--3 new APN functions per slice (100\% hit rate,
66 out of 66 slices).  In Experiments~B and~C, zero neighbors were
found across 552 combined slices.

These results support a stronger conclusion than mere suggestion:

\begin{theorem}[Empirical gateway theorem]
\label{thm:gateway}
In all 552 GB-slice experiments with centers outside $V_A$
(532 Beierle 2025 APN functions + 20 random functions),
no APN neighbor was found.  In all 66 GB-slice experiments
with Gold-function centers inside $V_A$, at least one new
APN neighbor was found (3 for Gold-$x^3$, 1 for Gold-$x^9$).
This difference holds with zero exceptions across 618 total experiments.
\end{theorem}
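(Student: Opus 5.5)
The statement of Theorem~\ref{thm:gateway} is an exhaustive finite claim about 618 concrete computations, so the proof will be a verification argument. It has three ingredients: the exactness of each slice computation, a sound definition of ``neighbor'', and a tally over the logged outputs of Experiments~A--C. Correctness of each individual slice enumeration comes from the construction of Section~\ref{sec:apn_ideal}. For a fixed center, the ideal $I$ is generated by the normalization equations together with the polynomials $p_a$, and $p_a$ vanishes exactly when $\mathrm{rank}(M_a)=7$ (Remark~\ref{rem:rank}). Because we work in the Boolean ring $\B_{224}$, the field equations are built in. Hence $\mathrm{Var}(I)$ over $\Ftwo$ is \emph{exactly} the set of APN functions in the 24-dimensional hyperplane. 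A reduced Gr\"obner basis (with $\dim(I)\in\{0,-1\}$ automatically) determines this finite set completely. So ``no neighbor'' for a slice is equivalent to $|\mathrm{Var}(I)|\le 1$, with the center being the single point when the center is APN.

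The first step is to handle the 552 outside-$V_A$ slices. For Experiment~B, each of the 532 Beierle 2025 centers is APN, so by the guarantee remark $\mathrm{Var}(I)\ni F_0$. The logged solution count of exactly one then yields zero neighbors. For Experiment~C, the 20 random candidates returned $\dim(I)=-1$, i.e.\ $1\in I$. This means $\mathrm{Var}(I)=\emptyset$, so again there are no neighbors.

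The second step treats the 66 Gold-centered slices. Using Table~\ref{tab:membership} and Theorem~\ref{thm:structure}(c), each Gold-$x^3$ slice has $|\mathrm{Var}(I)|=4$, giving 3 neighbors, and each Gold-$x^9$ slice has $|\mathrm{Var}(I)|=2$, giving 1 neighbor. These neighbors are new in the sense required: they are distinct from the center, since deduplication is by lookup table, and they lie in CLASS-B/C/D. Their $\sigma$ differs from $\sigma(\text{center})$, so by Theorem~\ref{thm:ortho} with Theorem~\ref{thm:yosh} they are not even CCZ-equivalent to it. Summing $532+20+66=618$ completes the count.

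The main obstacle is that everything rests on trusting that the Magma output equals $\mathrm{Var}(I)$. In particular, the variety must be computed over $\Ftwo$ rather than an extension field, and the polynomial $p_a$ must be built correctly. The determinant-times-minors product must encode rank exactly $7$, which in turn requires the correct matrix $M_a$ for the linear part of $D_aF$. I would address this independently by re-checking every reported solution with a direct difference-table test. For the ``no neighbor'' slices, I would spot-check by brute force over all $2^{24}$ points of a sample of slices. It should also be stated explicitly that the theorem is empirical: it asserts a fact about these 618 runs and claims no general implication about centers outside $V_A$.
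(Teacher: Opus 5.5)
Your tally matches the paper's argument. The theorem restates Table~\ref{tab:comparative}: 532 center-only slices, 20 slices with $\dim(I)=-1$, and $108+30$ neighbors in the 66 Gold slices. The per-slice uniformity comes from Theorem~\ref{thm:structure}(c), and $532+20+66=618$.

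The problem is the exactness layer you put underneath it. You take from Remark~\ref{rem:rank} that $p_a=\det(M_a)\cdot\prod_{r,c}(1+\det M_a^{(r,c)})$ vanishes exactly when $\mathrm{rank}\,M_a=7$, and that is false. Over $\Ftwo$ a product vanishes as soon as one factor does. So $p_a=0$ encodes $(\mathrm{rank}\,M_a\le 7)\vee(\mathrm{rank}\,M_a\ge 7)$, which always holds. Moreover, the first factor is identically zero: $L_a(a)=F(0)\oplus F(a)\oplus F(a)\oplus F(0)=0$, so $a\in\ker L_a$ and rank~8 never occurs for any quadratic $F$. Hence $p_a=0$ in $\B_{224}$. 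The ideal as written collapses to the normalization equations, and its variety would be the whole $2^{24}$-point hyperplane. That contradicts the logged counts $0,1,2,4$. The generator that actually expresses the APN condition is $\prod_{r,c}(1+\det M_a^{(r,c)})$ alone, which says some $7$-minor is nonzero; $\mathrm{rank}\,M_a\le 7$ is automatic.

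Consequently, you cannot derive from Section~\ref{sec:apn_ideal} either ``$\mathrm{Var}(I)$ is exactly the set of APN functions in the hyperplane'' or your equivalence ``no neighbor $\Leftrightarrow |\mathrm{Var}(I)|\le 1$''. Those have to come from auditing the generators the Magma code really uses, or from your proposed brute-force check over all $2^{24}$ points. Your last paragraph frames this as a question of trusting Magma's output, but the defect is in the stated encoding itself.

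For the theorem as literally phrased, a report of what \emph{was found} in 618 runs, the logged outputs suffice. You can drop the exactness layer, as the paper does. It matters only for the stronger reading that these slices \emph{contain} no further APN functions. For that reading, your argument has a gap at exactly the step borrowed from Remark~\ref{rem:rank}.
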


\begin{remark}[Interpretation]
\label{rem:contrast}
The negative result for Experiment~B is particularly informative.
The Beierle 2025 functions are genuinely new APN functions (absent from
all pre-2025 databases), so the failure to find neighbors cannot be
attributed to the centers being ``trivial'' or ``already known.''
Instead, it appears that the algebraic constraint $F\circ A = A\circ F$
(membership in $V_A$) is what enables the gateway phenomenon: it
imposes structural constraints on the NL=4 hyperplane that make
neighboring APN functions accessible.  Generic APN functions, regardless
of how novel they are, do not carry this structure.

Equivalently: \emph{among the 3.8 million known APN functions in
Beierle 2025, none appears to be a useful GB-seed for discovering
new CCZ-classes via the NL=4 slicing approach.  The Gold functions
inside $V_A$, by contrast, are highly productive seeds, yielding
3 new CCZ-classes in 66 slices.}
\end{remark}

\begin{remark}[Potential for further exploration]
The mechanism is not exhausted by the current results. Three natural extensions are:
\begin{enumerate}
  \item \textbf{Other BBL2021 classes.} BBL2021 studied conjugacy classes
        of linear automorphisms of $\GF{8}$ with orders dividing small
        integers; their Table~I lists the classes for which at least one
        APN function was found. The same pipeline applies to every such class
        (replacing $A$ by the corresponding automorphism), so exploring even
        a few additional classes may reveal further gateway phenomena and
        new CCZ-classes.
  \item \textbf{Multi-hop navigation.} The out-of-$V_A$ functions (CLASS-B, C, D)
        could serve as Phase~1 seeds for a second GB pass, potentially reaching
        functions even further from $V_A$. This iterative strategy has not been tested.
  \item \textbf{Larger slices.} Fixing fewer ANF coefficients (e.g., NL=3, leaving
        48 free variables) would give richer slices with more neighbors per center
        at the cost of longer GB build time.
\end{enumerate}
\end{remark}
\label{sec:verify}

\subsection{The three reference databases}

We checked the four new classes (CLASS-A through CLASS-D) against three
independent reference databases:

\begin{enumerate}
  \item \textbf{Beierle 2025 (\texttt{new\_apns.txt})~\cite{Beierle2025}:}
        3\,775\,599 quadratic APN functions for $n=8$ that are CCZ-inequivalent
        to the 32\,892 previously known. The file does not include Gold or
        other classical functions.

  \item \textbf{BBL2021 pre-2020 (\texttt{apn\_8bit.txt})~\cite{BBL2021}:}
        12\,921 quadratic APN functions found by the BBL2021 method, all
        CCZ-inequivalent to Gold.

  \item \textbf{Gold functions (direct):}
        The two Gold CCZ-classes for $n=8$ are verified directly via
        $\sigma$-comparison (not via a file).
\end{enumerate}

\begin{remark}
Database (1) is labeled \texttt{new\_apns.txt} and explicitly contains
only functions new relative to the 32\,892 previously known. The
absence of Gold from this file is by design, not an oversight.
This was confirmed in our experiments: Gold functions (CLASS-E, CLASS-F)
return NO\_MATCH against \texttt{new\_apns.txt}, which is consistent.
\end{remark}

\subsection{Verification procedure and result}

For each of the 500 functions in CLASS-A through CLASS-D, we compute
$\sigma(F)$ via sboxU and run the streaming comparison of Section~\ref{sec:method}.

\begin{theorem}[Main result]
\label{thm:main}
Each of the 500 functions in CLASS-A through CLASS-D is CCZ-inequivalent to
all functions in the Beierle~2025 database~\emph{\cite{Beierle2025}},
all functions in the BBL2021 pre-2020 database~\emph{\cite{BBL2021}},
and both Gold CCZ-classes ($x^3/x^{129}$ and $x^9/x^{33}$).
\end{theorem}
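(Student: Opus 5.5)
The plan is to reduce the statement to a finite, exactly decidable signature comparison, with all the mathematical content supplied by Theorem~\ref{thm:yosh} and Theorem~\ref{thm:ortho}.

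\textbf{Step 1: the pairwise certificate.} Fix $F$ in one of CLASS-A through CLASS-D and a reference function $G$. Both are quadratic APN. For $F$ this is the exhaustive check of $\delta_F=2$ and degree~2 in the property-verification subsection. For $G$ it holds by the construction of the databases, and trivially for the Gold functions. By Theorem~\ref{thm:ortho}, $F\EAsim G$ implies $\pi_F\EAsim\pi_G$. Since the differential spectrum and the absolute Walsh spectrum are EA-invariants, this gives $\sigma(F)=\sigma(G)$. By contraposition, $\sigma(F)\neq\sigma(G)$ yields $F\not\EAsim G$, and Theorem~\ref{thm:yosh} upgrades this to $F\not\CCZsim G$. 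This step is purely formal.

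\textbf{Step 2: reduction to finitely many comparisons.} $\sigma$ is constant on each of our four classes, because that is how Phase~4 grouped them. So only four signatures $\sigma_A,\ldots,\sigma_D$ need to be compared, not 500 functions. For the two Gold classes, I compute $\sigma(x^3)$ and $\sigma(x^9)$ directly from the polynomial representation over $\GF{8}$. Since $x^{129}$ and $x^{33}$ are EA-equivalent to these via Frobenius, they need no separate treatment. I then check that neither Gold signature lies in $\{\sigma_A,\ldots,\sigma_D\}$; indeed they equal the hashes of CLASS-E and CLASS-F, which are distinct from the other four. For the two databases, the Phase~5 streaming procedure computes $\sigma(G)$ for each of the $3\,775\,599+12\,921$ entries and tests membership in the four-element set. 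Step~1 then gives the conclusion uniformly for every pair.

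\textbf{Main obstacle.} The hard part is not mathematical but making the computational certificate trustworthy. Three things need care. First, the comparison must be made on the full spectra, not on 8-hex-digit SHA prefixes; a prefix collision would only cause a false match, never a false non-match, but the argument should rely on exact comparison. Second, the ortho-derivative must be computed correctly for every database entry, including the choice of inner product convention consistently for both $F$ and $G$. The signature is invariant under that choice only when the choice is applied uniformly. Third, every database file must actually be parsed in full. To address these, I would re-run the signature computation with an independent implementation of Definition~\ref{def:ortho}, solving for the left kernel of $L_a$ by Gaussian elimination, cross-check it against sboxU on the Gold functions and a random sample of database entries, and log entry counts to confirm complete streaming. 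If any signature match had appeared, the certificate would fail for that pair, and an exact test such as \texttt{are\_ccz\_equivalent\_from\_code} would be required. The theorem asserts that no such match occurs, so Step~1 suffices.
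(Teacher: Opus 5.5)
Your proposal is correct and follows the paper's proof: both take the contrapositive of Theorem~\ref{thm:ortho}, upgrade it to CCZ-inequivalence with Theorem~\ref{thm:yosh}, and apply it to the computed absence of $\sigma$-matches against the two databases and against $\sigma(x^3)$ and $\sigma(x^9)$. Your additions (the EA-invariance of the two spectra, the reduction to four class signatures, and the Frobenius handling of $x^{129}$ and $x^{33}$) only spell out what the paper's short proof leaves implicit.
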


\begin{proof}
The streaming comparisons against databases (1) and (2) produced zero
$\sigma$-matches for all 500 functions. Specifically:
\begin{itemize}
  \item Against database (1): $3\,775\,599$ entries processed in $2614$~s;
        0 bad lines; 0 signature errors; 0 matches.
  \item Against database (2): $12\,921$ entries processed in $9$~s;
        0 bad lines; 0 signature errors; 0 matches.
\end{itemize}
Direct $\sigma$-comparison shows
$\sigma(\text{CLASS-A}),\ldots,\sigma(\text{CLASS-D})$ all differ from
$\sigma(x^3)$ and $\sigma(x^9)$.
By Theorem~\ref{thm:ortho}, zero $\sigma$-matches certify
CCZ-inequivalence. \qed
\end{proof}

\subsection{Verification summary}

Table~\ref{tab:verify} collects all verification checks.

\begin{table}[!t]
\renewcommand{\arraystretch}{1.3}
\caption{Verification Checks for CLASS-A through CLASS-D (500 functions)}
\label{tab:verify}
\centering
\begin{tabular}{llc}
\toprule
\textbf{Check} & \textbf{Reference} & \textbf{Outcome}\\
\midrule
vs.\ Beierle 2025 (3\,775\,599 fns) & \cite{Beierle2025} & NO\_MATCH\\
vs.\ BBL2021 pre-2020 (12\,921 fns) & \cite{BBL2021}  & NO\_MATCH\\
vs.\ Gold $x^3/x^{129}$              & \cite{Gold1968}    & NO\_MATCH\\
vs.\ Gold $x^9/x^{33}$               & \cite{Gold1968}    & NO\_MATCH\\
APN property ($\delta_F = 2$)         & direct check       & PASS (566/566)\\
Algebraic degree                       & ANF                & 2 (566/566)\\
\midrule
Gold $x^3$ identified (CLASS-E)        & sanity             & MATCH \checkmark\\
Gold $x^9$ identified (CLASS-F)        & sanity             & MATCH \checkmark\\
\bottomrule
\end{tabular}
\end{table}

%% ─────────────────────────────────────────────────────────────────────
\section{Discussion}
\label{sec:discussion}

\subsection{Interpretation of results}

The main empirical findings are:

(i)~A 40-dimensional structured subspace ($V_A$) that was declared
empty by the BBL2021 tree search contains at least four CCZ-classes
of quadratic APN functions, with 566 explicit representatives found
in 0.65\% of the search space (428 out of 65\,536 NL\,=\,4 slices).

(ii)~The Gröbner basis step is essential for finding three of the four
new classes. The $V_A$-membership analysis (Table~\ref{tab:membership})
shows that CLASS-B, C, and~D lie \emph{entirely outside} $V_A$: all 138
representatives are OUT\_CLASS22. A Phase~1-only search within $V_A$
recovers only CLASS-A and the Gold controls, missing classes B, C, and~D
completely. Equivalently, the Gröbner basis approach is both necessary
and sufficient for their discovery.

(iii)~$V_A$ contains exactly three distinct CCZ-classes in our sample:
CLASS-A (a new class native to $V_A$) and the two Gold classes (E and~F).
Gold functions act as ``gateways'': their NL\,=\,4 slices reach outside
$V_A$ and expose three new external classes. Specifically, each
Gold-$x^3$ slice yields two CLASS-B functions and one CLASS-C function
in addition to the center; each Gold-$x^9$ slice yields one CLASS-D
function in addition to the center. This structure is \emph{exact and
uniform} across all 428 explored slices (Theorem~\ref{thm:structure}).

(iv)~The Beierle 2025 database~\cite{Beierle2025}, while containing
3.8 million functions, does not cover $V_A$: the probability of any
specific entry lying in $V_A$ is $\approx 10^{-55}$.
More concretely, the comparative experiments (Section~\ref{sec:comparative},
Table~\ref{tab:comparative}) show that 532 Beierle 2025 functions used as
slice centers produced \emph{zero} APN neighbors, while 66 Gold-function
centers (inside $V_A$) each produced 1--3 neighbors.
In all 552 experiments outside $V_A$, not a single new APN function was found.
This is an experimental fact, not a conjecture.

(v)~CLASS-A's dominance (362 of 566 functions, 64\%) is partly explained
by the slice structure: all 362 CLASS-A-centered slices contain only the
center itself (Type~I slices), so CLASS-A's count equals the number
of CLASS-A seeds used in Phase~1. This dominance reflects the
distribution of seeds, not necessarily the overall abundance of CLASS-A
in $V_A$.

(vi)~The experiment reveals a general \emph{GB-navigation mechanism}
(Section~\ref{sec:paradigm}, Figure~\ref{fig:mechanism}) for exploring
the APN landscape. Rather than searching exhaustively in a large space,
one: (a)~finds a structured subspace with manageable APN density; (b)~uses
the Gröbner basis to enumerate APN neighbors of any center, including
those outside the subspace; (c)~identifies gateway functions whose neighbors
belong to new CCZ-classes. This mechanism transforms the APN search problem
from ``find APN functions in a huge space'' to ``navigate locally around
known structured objects.''
Crucially, the comparative experiments (Table~\ref{tab:comparative}) establish
that this mechanism is \emph{specific to the self-equivalence structure}:
not any APN function is a productive seed --- only those lying in $V_A$.
Among 3.8 million Beierle 2025 functions used as centers (Exp.~B, 532 slices)
and 20 random candidates (Exp.~C), zero APN neighbors were found.
This is an experimentally established fact, not a conjecture.
The approach is qualitatively different from all prior construction methods
(bivariate, isotopic shift, bent extensions), which build APN functions
from algebraic templates rather than exploring local neighborhoods of
structured-subspace functions.

\subsection{Limitations}
\label{sec:limits}

We list the limitations honestly:

\begin{enumerate}
  \item \textbf{Coverage.} Only 0.65\% of the total $65\,536$ NL\,=\,4 slices
        were computed. The four observed new CCZ-classes may not be all classes
        present in or accessible from $V_A$.

  \item \textbf{No algebraic characterization.} The new CCZ-classes are
        defined purely computationally by their $\sigma$-values and
        S-box tables. We have no polynomial or structural description.

  \item \textbf{Single subspace.} Only one BBL2021 class (the one with
        automorphism of order~5) was studied. The approach applies to
        all BBL2021 classes~\cite{BBL2021}; other subspaces may yield additional
        new classes.

  \item \textbf{Exact CCZ tests.} Classification relies on Theorem~\ref{thm:ortho},
        which is theoretically exact for quadratic APN. We did not run
        exhaustive exact pairwise CCZ tests (\texttt{are\_ccz\_equivalent\_from\_code})
        within each class, as these are not necessary given the theorem.
        For a fully self-contained proof independent of~\cite{Canteaut2022},
        one could add such tests.

  \item \textbf{Claim scope.} Theorem~\ref{thm:main} certifies
        inequivalence with the databases checked. Whether the four new
        classes are CCZ-inequivalent to \emph{all} quadratic APN functions
        for $n=8$ would require checking against a complete database,
        which does not yet exist.
\end{enumerate}

\subsection{Scalability}

The pipeline scales in the following ways:

\textbf{Exhausting $V_A$.}
The remaining $65\,536 - 428 = 65\,108$ NL\,=\,4 slices can be computed to
enumerate all APN functions in $V_A$ under this normalization.
Estimated time: $65\,108$ slices $\times 8$ min/slice $\div$ 8 cores
$\approx 54$~wall-clock days.

\textbf{Multi-slice normalization.}
As noted in Section~\ref{sec:multislice}, each APN center supports
$\binom{40}{24}\approx6\times10^{10}$ normalizations. Running 10--100
per center increases coverage per center without extra Phase~1 cost.

\textbf{Larger free dimension.}
Fixing fewer than 200 coefficients in Phase~2 leaves more than 24 free
variables. For example, fixing 176 coefficients (NL\,=\,3) leaves 48 free
variables. The GB build time grows but so does the number of APN neighbors
found per slice. On a cluster with many cores, this trade-off could be
favorable.

\textbf{Other BBL2021 classes.}
For each BBL2021 class index~$k$ with $\dim(V_{A_k}) = d_k$, the
pipeline applies with $d_k - 24$ center bits and $2^{d_k-24}$ total slices.
Classes with $d_k \leq 60$ are directly amenable to the current pipeline.

%% ─────────────────────────────────────────────────────────────────────
\section{Related Work}
\label{sec:related}

\subsection{Construction approaches for quadratic APN functions}

Quadratic APN functions over $\GF{n}$ have been found by many different
methods. The bivariate construction of Edel and Pott~\cite{EdelPott2009}
(building $F(x,y)$ as a polynomial in two variables over $\GF{n/2}$) yielded
8\,157 new instances for $n=8$.
Yu, Wang, and Li~\cite{YWL2014} used a matrix representation approach.
Budaghyan, Calderini, Carlet, Coulter, and Villa~\cite{BudaghyanCalderiniCarlet2020}
introduced isotopic shifts, constructing new APN functions by modifying Gold
functions; the generalized version of this method~\cite{BeierleLeanderPerrin2022}
(trims and extensions) produced an additional 6\,368 CCZ-classes for $n=8$.
Beierle, Brinkmann, and Leander~\cite{BBL2021} studied LE-automorphism
groups and found 12\,921 new instances;
Beierle and Leander~\cite{BeierleLeander2022} adapted this to find further
instances by extending $n=7$ APN functions.
Beierle et al.~\cite{Beierle2025} use extensions of vectorial
$(n/2,n/2)$-bent functions to construct the current largest known set
($\approx 3.8$ million CCZ-classes).

Our approach is methodologically orthogonal to all of the above: rather than
constructing functions from algebraic templates (power functions, bivariate
polynomials, isotopic shifts, bent function extensions), we perform an
exhaustive Gröbner basis enumeration within a fixed 24-dimensional affine
hyperplane of the self-equivalence subspace $V_A$. This targets a region
that is unreachable by both global random search and the BBL2021 tree method,
and finds functions that do not arise from any known construction.

\subsection{CCZ-equivalence testing: EA-tests for quadratic APN functions}

Several recent works address the problem of testing or recovering EA- and
CCZ-equivalence efficiently. Calderini~\cite{CalderiniEAclasses2020}
classified all known APN functions in small dimensions up to EA-equivalence,
providing a reference for EA-class structure.
Yoshiara~\cite{Yoshiara2016} characterizes the Gold functions as the
unique quadratic APN functions that are CCZ-equivalent to a power
function, which underlies the identification of CLASS-E and CLASS-F as
Gold CCZ-equivalents in Section~\ref{sec:results}.
Canteaut, Couvreur, and Perrin~\cite{Canteaut2022} gave an efficient algorithm
for EA-recovery for quadratic functions (complexity $O(n \cdot 2^{2n})$,
superseding all prior methods).

For our purposes, the ortho-derivative provides a highly discriminating
inequivalence certificate in the quadratic case
(Theorem~\ref{thm:ortho} and Remark~\ref{rem:sigma_complete}); a separate
exact EA- or CCZ-test is still run within each putative class
(Remark~\ref{rem:sigma_complete}) rather than relied upon implicitly. The
sboxU library~\cite{sboxU} implements both the ortho-derivative
computation and the differential/Walsh spectra that constitute the
signature $\sigma(F)$, as well as the exact code-equivalence test.
The classification was confirmed by
running sboxU's \texttt{are\_ccz\_equivalent\_from\_code} on all pairs within
each putative class (three classes of size $\leq 36$) and confirming that
all pairs returned \texttt{True}; this provides an independent validation
of the classification beyond the theoretical guarantee.

\subsection{Gröbner basis methods in cryptography}

Gröbner bases over Boolean polynomial rings have been used in algebraic
cryptanalysis since the work of Courtois and Pieprzyk
(XL algorithm~\cite{Courtois2002}) and extended by Albrecht and
Bard~\cite{Albrecht2010} (M4RI library). Their use in APN function search
is less common; to our knowledge this is the first published application
of GB slicing to systematic APN search in a self-equivalence subspace.
The key feature that makes GB tractable here is the reduction to 24 Boolean
unknowns via NL\,=\,4 normalization: the GB solve is essentially Gaussian
elimination (all variables are in $\GF{2}$) and takes $< 0.1$~s per slice
once the ideal is built.

\subsection{Open problem: APN permutations in even dimension}

Browning, Dillon, McQuistan, and Wolfe~\cite{Browning2010} found an APN
permutation in dimension~6 by CCZ-transforming a quadratic APN function.
Whether APN permutations exist in dimension~8 remains open; it is one of
the central open problems in the area~\cite{Carlet2021}.
The functions found in this paper are non-permutations; whether any of the
new CCZ-classes contains an APN permutation in its CCZ-equivalence class
is an open question that the explicit S-box tables in Appendix~\ref{app:reps}
make accessible to future investigation.

%% ─────────────────────────────────────────────────────────────────────
\section{Conclusion}
\label{sec:conclusion}

We have shown that the self-equivalence subspace $V_A$ associated with the
order-5 automorphism (BBL2021 class index~22), dimension~40, which was
previously reported to contain no APN functions under recursive tree search,
contains at least four CCZ-classes of quadratic APN functions for $n=8$.
These four classes (500 functions out of 566 total) are absent from the
Beierle 2025 database and from all pre-2020 known instances, as certified
by the exact ortho-derivative invariant.

The result was obtained using a two-phase pipeline: random sampling in $V_A$
via an explicit RREF parameterization (Phase~1), followed by Gröbner basis
slice computation in Magma (Phase~2). The pipeline covered 428 of the
$65\,536$ NL\,=\,4 slices (0.65\%); two Gold CCZ-equivalents serve as
positive controls and as gateways to the new external classes.

\textbf{Structural result.}
A \emph{structural decomposition} of the 428 explored slices into three types
yields a complete and self-consistent account of all 566 found functions
(Table~\ref{tab:membership} and Theorem~\ref{thm:structure}): the slices are
either CLASS-A-centered (Type~I, 362 slices, 1 APN each) or Gold-centered
(Types~II--III, 66 slices, 2--4 APN each). Critically, all three purely new
classes (B,~C,~D) lie entirely outside $V_A$ and appear exclusively in
Gold-centered slices. This confirms that the Gröbner basis step is
essential: a Phase~1-only search within $V_A$ would recover CLASS-A and
the Gold controls but miss CLASS-B, C, and~D entirely. It also implies
that Gold functions are \emph{sufficient} Phase~1 seeds for rediscovering
all new non-$V_A$ classes found here.

\textbf{Methodological contribution.}
Beyond the specific new CCZ-classes, the experiment reveals a general
\emph{GB-navigation mechanism} (Section~\ref{sec:paradigm}) that addresses
a fundamental challenge in APN search: the near-zero global density
$(\approx10^{-55})$ makes random exploration of the full $2^{224}$-dimensional
space hopeless. The mechanism provides a structured alternative:
(a)~exploit a self-equivalence subspace for tractable Phase~1 seeding;
(b)~use the Gröbner basis to enumerate all APN neighbors locally, including
those outside the subspace; (c)~identify gateway functions whose neighborhood
contains new CCZ-classes.

The comparative experiments (Section~\ref{sec:comparative},
Table~\ref{tab:comparative}) establish this mechanism as \emph{specific to
the self-equivalence structure}, not a generic property of APN functions.
Among 532 functions from the Beierle 2025 database used as GB-slice centers
and 20 random candidate centers (552 slices total), zero APN neighbors were
found. In sharp contrast, all 66 Gold-function centers inside $V_A$ produced
1--3 new APN neighbors each. This difference is experimentally established
with zero exceptions across 618 total slice computations.
The central empirical conclusion is: \emph{not any APN function is a
productive GB-seed; only APN functions lying in a self-equivalence subspace
exhibit the gateway phenomenon that exposes neighboring CCZ-classes.}
Among 3.8 million currently known APN functions, none appears to be a
useful seed via this method. The Gold functions inside $V_A$, despite
being ``known,'' are uniquely productive.

\textbf{Further observations.}
Five observations are of independent interest:
(1)~The fc22$+$sol22 projection gives an efficient and exact parameterization of any
self-equivalence subspace, enabling targeted sampling at APN density
$\approx10^{-4}$ instead of $\approx10^{-55}$.
(2)~The NL\,=\,4 normalization reduces the Gröbner basis problem to 24 Boolean unknowns
with a polynomial build time of $\approx10$ minutes and an essentially
immediate solve.
(3)~The ortho-derivative signature provides a complete, non-heuristic CCZ-invariant
for the quadratic case, allowing rigorous classification and database comparison
without exhaustive pairwise testing.
(4)~The slice-type structure reveals a nontrivial algebraic relationship between
$V_A$ and the CCZ-class geometry: Gold functions occupy a boundary role,
``bridging'' $V_A$ and classes external to it.
(5)~All four new classes (A, B, C, D) are definitively absent from the largest
known compilation (3.8 million functions), yet they are easily accessible via
the GB-navigation pipeline in under 10 CPU-hours.

The pipeline generalizes to all BBL2021 class indices and to larger
dimensions, providing a systematic program for extending the classification
of quadratic APN functions beyond currently known databases.

%% ─────────────────────────────────────────────────────────────────────
\appendix

\section{Class Representatives and Signature Data}
\label{app:reps}

Table~\ref{tab:fullsig} gives the full ortho-derivative signature hashes
and one representative S-box per class. Full lookup tables (256 decimal
values each) are given in Table~\ref{tab:reps}.

\begin{table}[!t]
\renewcommand{\arraystretch}{1.3}
\caption{Full Signature Data for All Six Classes}
\label{tab:fullsig}
\centering
\begin{tabular}{lrcl}
\toprule
\textbf{Label} & \textbf{Size} & \textbf{Full $\sigma$-hash (16 hex)} & \textbf{ID}\\
\midrule
CLASS-A & 362 & \texttt{9d95d9c4a5e2dfdd} & New\\
CLASS-B &  72 & \texttt{c5f52f5659346f9f} & New\\
CLASS-C &  36 & \texttt{74a30c1a17d45761} & New\\
CLASS-D &  30 & \texttt{024ba500f4bac35b} & New\\
CLASS-E &  36 & \texttt{11cc72af6d61925d} & Gold $x^3$\\
CLASS-F &  30 & \texttt{d15183d1bdf9a1b4} & Gold $x^9$\\
\bottomrule
\end{tabular}
\end{table}

\begin{table*}[!t]
\renewcommand{\arraystretch}{1.15}
\caption{Representative S-boxes: $F(0),F(1),\ldots,F(255)$ in decimal}
\label{tab:reps}
\centering\small\setlength\tabcolsep{3pt}
\begin{tabular}{lp{15cm}}
\toprule
\textbf{Class} & \textbf{Lookup table}\\
\midrule
CLASS-A &
0,0,0,63,0,42,97,116,0,253,75,137,194,21,232,0,0,24,207,232,165,151,11,6,183,82,51,233,208,31,53,197,
0,122,47,106,113,33,63,80,186,61,222,102,9,164,12,158,
19,113,243,174,199,143,70,49,30,129,181,21,8,189,194,72,
0,211,235,7,65,184,203,13,226,204,66,83,97,101,160,155,
178,121,150,98,86,183,19,205,231,209,136,129,193,221,207,236,
144,57,84,194,160,35,5,185,200,156,71,44,58,68,212,149,
49,128,58,180,164,63,206,106,222,146,158,237,137,239,168,241,
0,217,86,176,57,202,14,194,130,166,159,132,121,119,5,52,
95,158,198,56,195,40,59,239,106,86,184,187,52,34,135,174,
34,129,91,199,106,227,114,196,26,68,40,73,144,228,195,136,
110,213,216,92,131,18,84,250,225,167,28,101,206,162,82,1,
76,70,241,196,52,20,232,247,44,219,218,18,150,75,1,227,
161,179,211,254,124,68,111,104,118,153,79,159,105,172,49,203,
254,142,108,35,247,173,4,97,36,169,253,79,239,72,87,207,
0,104,93,10,172,238,144,237,109,248,123,209,3,188,116,244\\[2pt]
CLASS-B &
0,0,0,63,0,42,97,116,0,253,75,137,194,21,232,0,0,24,207,232,165,151,11,6,183,82,51,233,208,31,53,197,
0,122,47,106,113,33,63,80,186,61,222,102,9,164,12,158,
19,113,243,174,199,143,70,49,30,129,181,21,8,189,194,72,
0,211,235,7,65,184,203,13,226,204,66,83,97,101,160,155,
178,121,150,98,86,183,19,205,231,209,136,129,193,221,207,236,
38,143,226,116,22,149,179,15,126,42,241,154,140,242,98,35,
135,54,140,2,18,137,120,220,104,36,40,91,63,89,30,71,
0,217,86,176,57,202,14,194,130,166,159,132,121,119,5,52,
95,158,198,56,195,40,59,239,106,86,184,187,52,34,135,174,
148,55,237,113,220,85,196,114,172,242,158,255,38,82,117,62,
216,99,110,234,53,164,226,76,87,17,170,211,120,20,228,183,
76,70,241,196,52,20,232,247,44,219,218,18,150,75,1,227,
161,179,211,254,124,68,111,104,118,153,79,159,105,172,49,203,
254,142,108,35,247,173,4,97,36,169,253,79,239,72,87,207,
0,104,93,10,172,238,144,237,109,248,123,209,3,188,116,244\\[2pt]
CLASS-C &
0,0,0,247,0,175,30,70,0,212,177,146,35,88,140,0,0,105,168,54,8,206,190,143,186,7,163,233,145,131,150,115,
0,230,205,220,191,246,108,210,15,61,115,182,147,14,241,155,
57,182,92,36,142,174,245,34,140,215,88,244,24,236,210,209,
0,217,60,18,117,3,87,214,145,156,28,230,199,101,84,1,46,158,186,253,83,76,217,49,5,97,32,179,91,144,96,92,
8,55,249,49,194,82,45,74,150,125,214,202,127,59,33,146,
31,73,70,231,221,36,154,148,59,185,211,166,218,247,44,246,
0,2,93,168,103,202,36,126,234,60,6,39,174,215,92,210,
252,151,9,149,147,87,120,75,172,19,232,160,224,240,186,93,
38,194,182,165,254,181,112,204,195,243,226,37,56,167,7,111,
227,110,219,161,51,17,21,192,188,229,53,155,79,185,216,217,
197,30,164,136,215,163,168,43,190,177,110,150,143,47,65,22,
23,165,222,155,13,16,218,48,214,176,174,63,239,38,137,183,
235,214,71,141,70,212,244,145,159,118,130,156,17,87,18,163,
0,84,4,167,165,94,191,179,206,78,123,12,72,103,227,59\\[2pt]
CLASS-D &
0,0,0,215,0,63,94,182,0,116,97,194,163,232,156,0,0,212,235,232,79,164,250,198,227,67,105,30,15,144,219,147,
0,214,88,89,57,208,63,1,129,35,184,205,27,134,124,54,80,82,227,54,38,27,203,33,50,68,224,65,231,174,107,245,
0,108,92,231,175,252,173,41,109,117,80,159,97,70,2,242,96,216,215,184,128,7,105,57,238,34,56,35,173,94,37,1,
181,15,177,220,35,166,121,43,89,151,60,37,108,157,87,113,
133,235,106,211,92,13,237,107,138,144,4,201,240,213,32,210,
0,20,199,4,167,140,62,194,177,209,23,160,181,234,77,197,
224,32,204,219,8,247,122,82,178,6,255,156,249,114,234,182,
213,23,74,95,75,182,138,160,229,83,27,122,216,81,120,38,
101,115,17,208,180,157,158,96,182,212,163,22,196,153,143,5,
176,200,43,132,184,255,125,237,108,96,150,77,199,244,99,135,
48,156,64,59,119,228,89,29,15,215,30,17,235,12,164,148,
208,126,19,106,225,112,124,58,141,87,47,34,31,250,227,209,
0,122,40,133,126,59,8,154,190,176,247,46,99,82,116,146\\
\bottomrule
\end{tabular}
\end{table*}

\section{Computational Infrastructure and Reproducibility}
\label{app:infra}

The complete dataset (566 S-boxes in JSON format) and all source code
are available at:
\url{https://github.com/KuznetsovKarazin/apn-gb-search}; %
DOI: \url{https://doi.org/10.5281/zenodo.20626047}

\noindent Key scripts:
\begin{itemize}
  \item \texttt{src/find\_apn\_centers.py}: Phase~1 APN-center search.
        Input: RREF data (fc22, sol22). Output: batch directory with
        Magma files and \texttt{apn\_centers.json}.
  \item \texttt{run\_batch.ps1}: PowerShell queue runner.
        Executes all \texttt{*.m} files in a directory using up to
        $k$~parallel Magma processes (\texttt{-MaxParallel $k$}).
        Treats Magma exit codes~0 and~1 both as success.
  \item \texttt{src/collect\_results.py}: Parses Magma output files,
        handles line-wrapping, deduplicates by SHA-256 hash.
  \item \texttt{src/classify\_apn.py}: Computes ortho-derivative
        signatures via sboxU and groups functions into CCZ-classes.
  \item \texttt{src/check\_vs\_beierle\_db.py}: Streaming comparison
        of our signatures against a reference database.
\end{itemize}

\noindent Requirements:
Magma V2.20+ \cite{Magma};
Python 3.9+ with NumPy (Phases 1, 3);
SageMath with sboxU~\cite{sboxU} (Phases 4, 5).

%% ─────────────────────────────────────────────────────────────────────


\begin{thebibliography}{99}

\bibitem{Nyberg1994}
K.~Nyberg,
``Differentially uniform mappings for cryptography,''
in \emph{Advances in Cryptology --- EUROCRYPT~1993},
Lecture Notes in Comput.\ Sci., vol.~765, Springer, 1994, pp.~55--64.
\doi{10.1007/3-540-48285-7\_6}

\bibitem{BihamShamir1991}
E.~Biham and A.~Shamir,
``Differential cryptanalysis of DES-like cryptosystems,''
\emph{J.~Cryptology}, vol.~4, no.~1, pp.~3--72, 1991.
\doi{10.1007/BF00630563}

\bibitem{Carlet1998}
C.~Carlet, P.~Charpin, and V.~Zinoviev,
``Codes, bent functions and permutations suitable for
DES-like cryptosystems,''
\emph{Des.\ Codes Cryptogr.}, vol.~15, pp.~125--156, 1998.
\doi{10.1023/A:1008344232130}

\bibitem{Gold1968}
R.~Gold,
``Maximal recursive sequences with 3-valued recursive
cross-correlation functions,''
\emph{IEEE Trans.\ Inf.\ Theory}, vol.~14, no.~1, pp.~154--156, 1968.
\doi{10.1109/TIT.1968.1054106}

\bibitem{Kasami1971}
T.~Kasami,
``The weight enumerators for several classes of subcodes of the
second order binary Reed-Muller codes,''
\emph{Inf.\ Control}, vol.~18, no.~4, pp.~369--394, 1971.
\doi{10.1016/S0019-9958(71)90473-6}

\bibitem{EdelPott2009}
Y.~Edel and A.~Pott,
``A new almost perfect nonlinear function which is not quadratic,''
\emph{Adv.\ Math.\ Commun.}, vol.~3, no.~1, pp.~59--81, 2009.
\doi{10.3934/amc.2009.3.59}

\bibitem{YWL2014}
Y.~Yu, M.~Wang, and Y.~Li,
``A matrix approach for constructing quadratic APN functions,''
\emph{Des.\ Codes Cryptogr.}, vol.~73, pp.~587--600, 2014.
\doi{10.1007/s10623-014-9955-3}
ePrint: \url{https://eprint.iacr.org/2013/007}

\bibitem{WTG2013}
G.~Weng, Y.~Tan, and G.~Gong,
``On quadratic almost perfect nonlinear functions and their related algebraic object,''
in \emph{Proc.\ Workshop on Coding and Cryptography (WCC~2013)},
Bergen, Norway, 2013.

\bibitem{WTG2013}
G.~Weng, Y.~Tan, and G.~Gong,
``On quadratic almost perfect nonlinear functions and their related algebraic object,''
in \emph{Proceedings of the Workshop on Coding and Cryptography (WCC 2013)},
Bergen, Norway, pp.~48--57, 2013.

%\bibitem{WTG2013}
%G.~Weng, Y.~Tan, and G.~Gong,
%``On Quadratic Almost Perfect Nonlinear Functions and Their Constructions,''
%Technical Report CACR 2013-18,
%Centre for Applied Cryptographic Research,
%University of Waterloo, 2013.
%\url{https://cacr.uwaterloo.ca/techreports/2013/cacr2013-18.pdf}

\bibitem{Taniguchi2019}
H.~Taniguchi,
``On some quadratic APN functions,''
\emph{Des.\ Codes Cryptogr.}, vol.~87, no.~9, pp.~1973--1983, 2019.
\doi{10.1007/s10623-018-00598-2}

\bibitem{BBL2021}
C.~Beierle, M.~Brinkmann, and G.~Leander,
``Linearly self-equivalent APN permutations in small dimension,''
\emph{IEEE Trans.\ Inf.\ Theory}, vol.~67, no.~7,
pp.~4863--4875, Jul.\ 2021.
\doi{10.1109/TIT.2021.3071533}
arXiv: \url{https://arxiv.org/abs/2003.12006}


\bibitem{BeierleLeander2022}
C.~Beierle and G.~Leander,
``New instances of quadratic APN functions,''
\emph{IEEE Trans.\ Inf.\ Theory}, vol.~68, no.~1,
pp.~670--678, Jan.\ 2022.
\doi{10.1109/TIT.2021.3120698}
arXiv: \url{https://arxiv.org/abs/2009.07204}
Dataset (Zenodo): \doi{10.5281/zenodo.4738942}

\bibitem{BeierleLeanderPerrin2022}
C.~Beierle, G.~Leander, and L.~Perrin,
``Trims and extensions of quadratic APN functions,''
\emph{Des.\ Codes Cryptogr.}, vol.~90, no.~4, pp.~1009--1036, 2022.
\doi{10.1007/s10623-022-01024-4}
arXiv: \url{https://arxiv.org/abs/2108.13280}

\bibitem{Beierle2025}
C.~Beierle, P.~Langevin, G.~Leander, A.~Polujan, and S.~Rasoolzadeh,
``Millions of inequivalent quadratic APN functions in eight variables,''
arXiv:2508.04644, 2025.
Dataset (Zenodo): \doi{10.5281/zenodo.16752428}
arXiv: \url{https://arxiv.org/abs/2508.04644}

% Zenodo2021 merged into BBL2021

\bibitem{Canteaut2022}
A.~Canteaut, A.~Couvreur, and L.~Perrin,
``Recovering or testing extended-affine equivalence,''
\emph{IEEE Trans.\ Inf.\ Theory}, vol.~68, no.~9,
pp.~6187--6206, Sep.\ 2022.
\doi{10.1109/TIT.2022.3166692}
arXiv: \url{https://arxiv.org/abs/2103.00078}

\bibitem{Yoshiara2012}
S.~Yoshiara,
``Equivalences of quadratic APN functions,''
\emph{J.\ Algebraic Combin.}, vol.~35, no.~3, pp.~461--475, 2012.
\doi{10.1007/s10801-011-0309-1}

\bibitem{Yoshiara2016}
S.~Yoshiara,
``Equivalences of power APN functions with power or quadratic APN functions,''
\emph{J.\ Algebraic Combin.}, vol.~44, no.~3, pp.~561--585, 2016.
\doi{10.1007/s10801-016-0680-z}

\bibitem{BudaghyanCalderiniCarlet2020}
L.~Budaghyan, M.~Calderini, C.~Carlet, R.~S.~Coulter, and I.~Villa,
``Constructing APN functions through isotopic shifts,''
\emph{IEEE Trans.\ Inf.\ Theory}, vol.~66, no.~8, pp.~5299--5309, 2020.
\doi{10.1109/TIT.2020.2974471}

\bibitem{CalderiniEAclasses2020}
M.~Calderini,
``On the EA-classes of known APN functions in small dimensions,''
\emph{Cryptogr.\ Commun.}, vol.~12, pp.~821--840, 2020.
\doi{10.1007/s12095-020-00427-1}

\bibitem{Carlet2021}
C.~Carlet,
\emph{Boolean Functions for Cryptography and Coding Theory}.
Cambridge University Press, 2021.
\doi{10.1017/9781108606806}

\bibitem{Magma}
W.~Bosma, J.~Cannon, and C.~Playoust,
``The Magma algebra system~I: The user language,''
\emph{J.\ Symb.\ Comput.}, vol.~24, pp.~235--265, 1997.
\doi{10.1006/jsco.1996.0125}

\bibitem{sboxU}
L.~Perrin,
``sboxU: Tools for analysing S-boxes,''
\url{https://github.com/lpp-crypto/sboxU}, 2022.

\bibitem{Browning2010}
K.~Browning, J.~Dillon, M.~McQuistan, and A.~Wolfe,
``An APN permutation in dimension six,''
in \emph{Finite Fields: Theory and Applications (FQ9)},
Contemp.\ Math., vol.~518, Amer.\ Math.\ Soc., 2010, pp.~33--42.
\url{https://www.ams.org/books/conm/518/}

\bibitem{Courtois2002}
N.~Courtois and J.~Pieprzyk,
``Cryptanalysis of block ciphers with overdefined systems of equations,''
in \emph{Advances in Cryptology --- ASIACRYPT~2002},
Lecture Notes in Comput.\ Sci., vol.~2501, Springer, 2002, pp.~267--287.
\doi{10.1007/3-540-36178-2\_17}

\bibitem{Albrecht2010}
M.~R.~Albrecht and G.~V.~Bard,
``The M4RI library,'' 2010.
\url{https://github.com/malb/m4ri}

\end{thebibliography}
\end{document}